\documentclass[10pt,conference]{IEEEtran}

\usepackage[utf8]{inputenc}
\usepackage[T1]{fontenc}
\usepackage{amsmath,amssymb,amsthm}
\usepackage{graphicx}
\usepackage{booktabs}
\usepackage{hyperref}
\usepackage{xcolor}
\usepackage{algorithm}
\usepackage{algpseudocode}
\usepackage{cite}
\usepackage{url}
\usepackage{multirow}
\usepackage{balance}
\usepackage{pifont}
\usepackage{xspace}
\usepackage[skip=2pt]{subcaption}

\hypersetup{
  hidelinks,
  pdfauthor={Wenyang Jia, Jingjing Wang, Xianneng Zou, Kai Lei},
  pdftitle={ReclaimNet: Reclaim-Aware Network Protocols for Voluntary GPU Sharing on Campus},
  pdfsubject={},
  pdfkeywords={}
}

\newtheorem{theorem}{Theorem}
\newtheorem{lemma}{Lemma}

\newtheorem{definition}{Definition}

\newcommand{\sys}{ReclaimNet\xspace}

\newcommand{\FILLNUM}[1]{#1}

\begin{document}

\title{\sys: Reclaim-Aware Network Protocols\\
  for Voluntary GPU Sharing on Campus}

\author{
  \IEEEauthorblockN{
    Wenyang Jia\IEEEauthorrefmark{1},
    Jingjing Wang\IEEEauthorrefmark{1},
    Xianneng Zou\IEEEauthorrefmark{2}, and
    Kai Lei\IEEEauthorrefmark{1}\IEEEauthorrefmark{3}}
  \IEEEauthorblockA{\IEEEauthorrefmark{1}ICN Lab, Shenzhen Graduate School, Peking University}
  \IEEEauthorblockA{\IEEEauthorrefmark{2}Tencent}
  \IEEEauthorblockA{\IEEEauthorrefmark{3}Corresponding author: \texttt{leik@pkusz.edu.cn}}
}

\maketitle

\begin{abstract}
University campuses host abundant but fragmented GPU
resources whose voluntary sharing is blocked by a mismatch
between revocable, autonomous ownership and migration
mechanisms that assume stationary failure hazards,
homogeneous interconnects, and unbounded transfer windows.
We present \sys, a network-layer migration protocol suite
that treats provider reclaim as a first-class contract rather
than a failure case, combining three mechanisms: (i)
reclaim-aware checkpoint scheduling that jointly adapts to
time-varying departure hazards and contended bandwidth across
co-resident jobs; (ii) volatility-aware destination selection
integrating topology, survival probability, and notice-window
feasibility; and (iii) deadline-aware migration traffic
control with edge enforcement and a sub-millisecond TC BPF
kill-switch. A two-month deployment on a 54-node heterogeneous
campus testbed reduces work loss by 66\% over Slurm
preempt-and-requeue and 38\% over pipeline-redundancy
checkpointing, with 38\% shorter downtime and under 3\%
degradation of background research traffic.
The prototype is open-sourced at the anonymous repository
\url{https://anonymous.4open.science/r/ICNP2026-ReclaimNet/}.
\end{abstract}

\begin{IEEEkeywords}
Network protocols, eBPF, reclaim-aware GPU migration
\end{IEEEkeywords}

\section{Introduction}
\label{sec:intro}

The rapid expansion of AI research has made GPU clusters
a critical resource in academic institutions. University
campuses host a rich but fragmented collection of GPU
resources, ranging from consumer-grade workstations to
institutional HPC servers, that collectively represent
substantial computing capacity. Yet this capacity remains
severely underutilized: our measurements across 54\,GPU
nodes spanning four buildings over two months reveal an
average utilization rate below~35\%, driven by temporal
mismatches between resource availability and
demand~(\S\ref{sec:measurement}).

\textbf{The campus GPU sharing opportunity.}
Resource pooling across departments can substantially raise
utilization while reducing procurement costs and supporting
green research. Unlike commercial cloud services, campus
sharing occurs within a trusted institutional network where
accountability mechanisms already exist and the marginal cost
of electricity and maintenance is low. The very properties
that make campus sharing attractive, namely mutual trust and
lightweight coordination, also define its fundamental
challenge.

\textbf{The voluntary participation challenge.}
Campus GPU resources are individually owned by faculty and
research labs. Unlike data-center nodes governed by
institutional mandates, campus providers are volunteers: they
contribute idle resources on their own terms and retain the
unconditional right to reclaim hardware immediately, without
prior notice, negotiation, or penalty. This property is not
an edge case to be tolerated but a \emph{first-class design
requirement}.

Existing systems are ill-suited for this model. Industrial
cluster managers such as Kubernetes~\cite{kubernetes} and
Slurm~\cite{yoo2003slurm} assume persistent node
availability under centralized control and treat departures
as failures. Volunteer computing systems such as
SETI@home~\cite{anderson2002seti} and
Folding@home~\cite{pande2003folding} support voluntary
participation over wide-area networks but provide no
stateful workload migration. Recent work on campus GPU
sharing~\cite{li2025gpunion} demonstrates that
provider-autonomy-first design with application-level
checkpointing is feasible, but leaves the underlying
\emph{network-layer protocols} for migration unaddressed.

\textbf{Why existing protocol techniques do not transfer.}
Voluntary campus GPU sharing breaks three foundational
assumptions that prior approaches rely on:

\begin{itemize}
  \item \textbf{Challenge~1 (Checkpoint timing):}
    Classical checkpoint models~\cite{young1974,daly2006}
    assume a stationary hardware failure hazard and a fixed
    write cost. In voluntary sharing, departure follows
    time-varying behavioral patterns (daily peaks at lunch
    and evening) and checkpoint cost itself depends on the
    bandwidth available at write time. Scheduled departures
    add a notice window $\tau$ in which the system must
    decide whether a final checkpoint can complete before
    reclaim, while emergency departures must fall back to
    the freshest completed snapshot. No prior model captures
    this joint time-varying hazard, network-limited cost,
    and reclaim-window feasibility.

  \item \textbf{Challenge~2 (Migration destination):}
    Campus networks are three-tier hierarchies with
    heterogeneous bandwidth: intra-building paths are
    $\approx$3.4$\times$ faster than the cross-building
    backbone in our measurements, and a destination that
    itself departs soon compounds failure risk.
    Topology-oblivious selection (round-robin, least-loaded)
    and conventional topology-aware placement (minimizing
    instantaneous transfer time) both miss the joint
    bandwidth-stability-deadline objective: a fast
    destination is useless if the checkpoint cannot land
    before the source's notice expires, and a nearby one
    is undesirable if its own reclaim hazard is high.

  \item \textbf{Challenge~3 (Migration traffic):}
    Voluntary reclaim imposes hard migration deadlines and
    instantaneous revocation events. Off-the-shelf rate
    control such as DSCP queuing or generic token-bucket
    shaping can cap throughput but cannot decide which
    migration flows are feasible under their notice windows
    or how to degrade gracefully when aggregate demand
    exceeds capacity. Commodity enforcement paths
    (\emph{e.g.}, iptables) further add $\sim$100\,ms
    latency, incompatible with sub-second reclaim events.
\end{itemize}

\begin{table}[t]
\centering
\caption{Why standard mechanisms are insufficient under
voluntary reclaim semantics.}
\label{tab:increment}
\scriptsize
\setlength{\tabcolsep}{3pt}
\begin{tabular}{p{0.19\columnwidth}p{0.24\columnwidth}p{0.24\columnwidth}p{0.23\columnwidth}}
\toprule
\textbf{Existing idea} & \textbf{Standard assumption} &
\textbf{Broken by VVN} & \textbf{\sys increment} \\
\midrule
Checkpointing & Stationary failures; fixed write cost &
Owner reclaim varies; cost depends on network &
Risk + bandwidth controller \\
Topology-aware placement & Destination is stable once selected &
Destination may be reclaimed soon; source has deadline &
Path + survival + deadline score \\
Traffic shaping & Fairness or rate cap is enough &
Flows have reclaim deadlines and infeasible overload cases &
Deadline admission with edge enforcement \\
Kill-switch & App action can enforce reclaim &
Provider needs immediate local autonomy &
Network-layer isolation semantics \\
\bottomrule
\end{tabular}
\end{table}

Voluntary departure thus turns workload migration into a
\emph{coupled control problem}: the checkpoint interval
determines how much state must be transferred; the
destination choice determines how much bandwidth is
available; and the traffic schedule determines whether that
transfer completes before the deadline. Each dimension
interacts with the others, and no existing protocol
addresses their joint requirements.

\textbf{Our contribution.}
We present \sys, which addresses these three challenges
through network-layer protocols grounded in analysis and
validated at scale.

\textbf{(P1) Reclaim-Aware Checkpoint Scheduling.}
P1 turns checkpointing from a fixed-period user policy into
an online control loop for voluntary reclaim. At each epoch
it uses the current emergency reclaim hazard $\lambda_e(t)$
and available bandwidth $B(t)$ (measured via eBPF) to select
a local interval:
\[
  \Delta t^{*}(t) \;=\; \sqrt{\frac{2C}{\lambda_e(t) \cdot B_{\text{eff}}(t)}}
\]
This local rule is only the per-job primitive. The protocol
contribution is the shared-network controller in
Theorem~\ref{thm:p1coupled}: when $K$ jobs contend for a
checkpoint bandwidth budget $B_{\mathrm{ckpt}}$, P1 minimizes
aggregate reclaim loss plus checkpoint overhead by allocating
bandwidth as
$b_i^*=B_{\mathrm{ckpt}}(\lambda_i C_i)^{1/3}/
\sum_j(\lambda_j C_j)^{1/3}$ and then setting each job's
interval from its assigned bandwidth. This gives P1 a
network-level coupling absent from independent per-job
timers. P1 also enforces a loss budget, checks whether
scheduled departures have enough notice for a final
checkpoint, and coordinates the resulting rate demands with
P3 (\S\ref{sec:p1}). A separate adaptivity-gap result
(Theorem~\ref{thm:p1corr}) explains when state-dependent
checkpointing improves over a state-independent fixed-interval
policy tuned to marginal averages: strict improvement occurs
iff $\lambda_e(t) B_{\text{eff}}(t)$ is not almost surely
constant, and the gap is a directly measurable
$L^{2}$ Cauchy--Schwarz defect of the joint
$(\lambda_e,B)$ trace.
P1 reduces average work loss by
\FILLNUM{66\%} compared to industry Slurm preempt+requeue
and \FILLNUM{38\%} versus Bamboo~\cite{thorpe2023bamboo},
and by \FILLNUM{6.7\%} relative to the
strongest state-independent online baseline (a time-varying
fixed-interval schedule), the latter matching the
$\sim$5.2\% Cauchy--Schwarz gap predicted by
Theorem~\ref{thm:p1corr} on the deployment trace
(\S\ref{sec:eval}).

\textbf{(P2) Volatility-Aware Destination Selection.}
P2 extends topology-aware scheduling with two constraints
absent from conventional placement: the source's notice
deadline and the destination's own reclaim hazard. It filters
candidate destinations that cannot receive the checkpoint
before the notice window closes, then scores feasible nodes
by current transfer cost and expected future re-migration
cost over the campus network graph. A locality filtering
lemma explains why same-building candidates dominate under
hierarchical campus bandwidths, while the full score prevents
choosing unstable local nodes (\S\ref{sec:p2}). P2 reduces
median migration downtime by \FILLNUM{38\%} versus random
destination selection.

\textbf{(P3) Deadline-Aware Migration Traffic Scheduling.}
P3 separates the protocol from the mechanism: the protocol
converts notice periods into per-flow bandwidth demands,
performs admission/allocation under a reserved research
traffic budget, and marks infeasible flows for degraded
recovery; Linux TC BPF token buckets are the edge
enforcement substrate. Under controlled-edge assumptions,
P3 provides two guarantees: feasible planned migrations
complete before their reclaim deadlines, and aggregate
migration traffic is bounded so non-migration traffic keeps
at least $B_{\min}$ bandwidth (\S\ref{sec:p3}). Kill-switch
enforcement moves to the network layer, reducing latency
from $\sim$100\,ms to under 1\,ms.

\textbf{Evaluation.}
We deploy \sys on a 54-node campus testbed with
heterogeneous GPUs (RTX~3090, RTX~4090, A100, A6000)
across four buildings (\S\ref{sec:eval}).
End-to-end results over two months show
\FILLNUM{28\%} higher GPU utilization, an
\FILLNUM{11.7 percentage-point} increase in migration success,
and under \FILLNUM{3\%}
research traffic degradation compared to fixed-interval
and topology-oblivious baselines,
with scheduling latency scaling sub-linearly up to
500 nodes in emulation.

The remainder of this paper is organized as follows.
\S\ref{sec:background} formalizes the voluntary volatile
network model. \S\ref{sec:related} reviews related work.
\S\ref{sec:measurement} presents our measurement study.
\S\ref{sec:p1}--\S\ref{sec:p3} describe the three
protocols. \S\ref{sec:impl} covers implementation.
\S\ref{sec:eval} presents evaluation.
\S\ref{sec:discussion} discusses limitations and concludes.

\section{Background and Motivation}
\label{sec:background}

\subsection{The Voluntary Volatile Network Model}
\label{sec:vvn}

We formalize the campus GPU sharing environment as a
\emph{Voluntary Volatile Network (VVN)}, characterized by
the following properties.

\begin{definition}[Voluntary Volatile Network]
A VVN is a distributed resource-sharing system in which:
(i) each node $v \in V$ is owned and operated by an
independent provider who may exit at any time, and
(ii) node departures are classified as \emph{scheduled}
(notice period $\tau > 0$, drawn from an empirical
distribution $F_\tau$ with $\tau_{\min}\geq 10$\,s in our
deployment) or \emph{emergency} ($\tau = 0$), with
emergency departures modeled as a \emph{non-homogeneous}
Poisson process with time-varying intensity
$\lambda_e(t)$.
\end{definition}

This model captures two key differences from data center
environments: \emph{departure is normative}, not a failure
to be avoided, and \emph{hardware is heterogeneous},
precluding low-level state migration across GPU
architectures.
It also induces a protocol contract not present in ordinary
clusters: a provider must be able to reclaim locally without
waiting for a central scheduler; emergency reclaim can only
recover from the freshest completed checkpoint; scheduled
reclaim succeeds only if the checkpoint transfer and restart
fit within the notice window; and migration traffic must not
consume the shared campus network simply because a provider
is exercising this right.

\subsection{Why Existing Approaches Fall Short}

Table~\ref{tab:comparison} summarizes how key systems
compare against the VVN requirements, and
Fig.~\ref{fig:architecture} previews \sys's coordinator
plus provider-node decomposition that we elaborate in
\S\ref{sec:p1}--\S\ref{sec:p3}.

\begin{table}[t]
\centering
\caption{Comparison of GPU/cluster management platforms
against VVN requirements.}
\label{tab:comparison}
\scriptsize
\resizebox{\columnwidth}{!}{
\begin{tabular}{lcccc}
\toprule
\textbf{System} & \textbf{Voluntary} & \textbf{Network-} & \textbf{Migration} & \textbf{Formal}\\
 & \textbf{Partic.} & \textbf{Adaptive} & \textbf{Protocol} & \textbf{Guarantees}\\
\midrule
Kubernetes~\cite{kubernetes}    & \ding{55} & \ding{55} & \ding{55} & \ding{55} \\
Slurm~\cite{yoo2003slurm}       & \ding{55} & \ding{55} & \ding{55} & \ding{55} \\
Gandiva~\cite{xiao2018gandiva}  & \ding{55} & \ding{55} & Partial   & \ding{55} \\
GPUnion~\cite{li2025gpunion}    & \ding{51} & \ding{55} & ALC only  & \ding{55} \\
\textbf{\sys}                   & \ding{51} & \ding{51} & P1+P2+P3  & \ding{51} \\
\bottomrule
\end{tabular}
}
\end{table}

\begin{figure*}[t]
  \centering
  \includegraphics[width=0.95\textwidth]{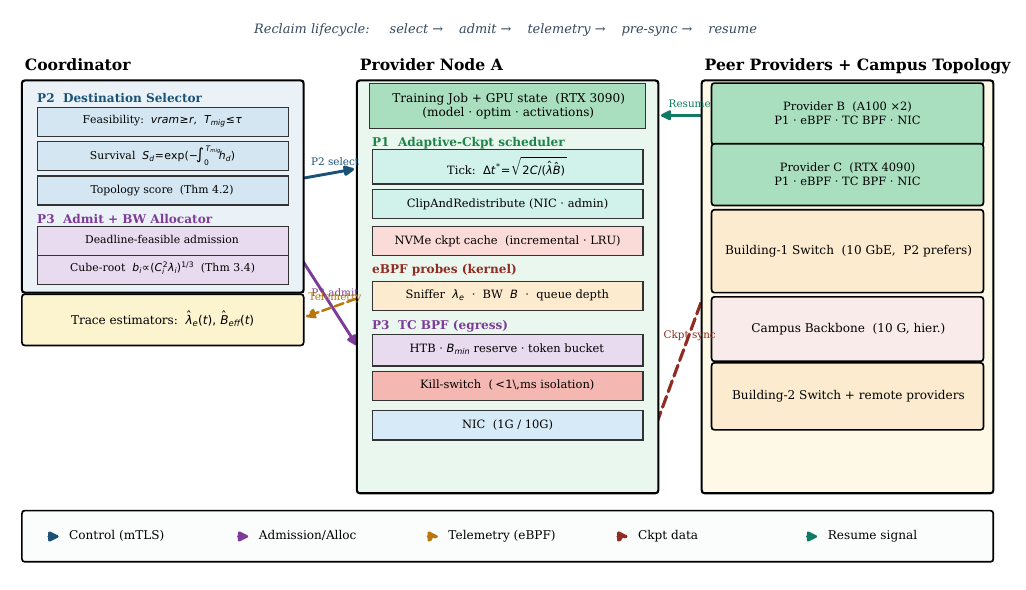}
  \caption{\sys architecture: coordinator (P2 selection + P3 admission), provider node (P1 Adaptive-Ckpt with eBPF/TC BPF enforcement), and topology-aware peer placement.}
  \label{fig:architecture}
\end{figure*}

\section{Related Work}
\label{sec:related}

\subsection{Campus and Volunteer Computing}

Condor~\cite{litzkow1988condor,thain2005condor} and volunteer
platforms such as SETI@home, Folding@home, and
BOINC~\cite{anderson2002seti,pande2003folding,anderson2004boinc}
show that idle, independently owned machines can be pooled, but
they target coarse-grained or stateless tasks rather than
long-running stateful GPU training. Volatile-grid systems such
as MPICH-V~\cite{bosilca2002mpichv} address different MPI-era
failure models. GPUnion~\cite{li2025gpunion} demonstrates that
provider-autonomy-first campus GPU sharing is feasible with
application-level checkpointing; \sys focuses on the missing
network-layer migration protocols required for reliable and
efficient operation under voluntary reclaim.

\subsection{Checkpoint and Migration Theory}

Young and Daly~\cite{young1974,daly2006} derive checkpoint
intervals under stationary hazards and fixed write costs.
Checkpoint libraries and user-level frameworks, including SCR,
FTI, libckpt, BLCR, and
DMTCP~\cite{moody2010scr,bautista2011fti,plank1995libckpt,hargrove2006blcr,ansel2009dmtcp},
optimize state capture and recovery under similar assumptions;
HPC and warehouse-scale failure studies likewise treat hazards
as quasi-stationary hardware properties~\cite{schroeder2007disk,pinheiro2007disk,schroeder2009dram}.
P1 retains Young/Daly as a local primitive but adds
time-varying reclaim hazard, network-dependent checkpoint cost,
notice-window feasibility, and multi-workload bandwidth
allocation.

VM live migration~\cite{clark2005live,hines2009postcopy,wood2011cloudnet}
and container checkpointing via CRIU reduce downtime in more
controlled settings, but they rely on hardware, kernel, or CUDA
compatibility that campus GPU pools cannot assume. \sys
therefore uses application-level checkpoints and focuses on
network-aware transfer timing, destination choice, and traffic
admission.

\subsection{GPU Cluster Scheduling}

GPU schedulers such as Gandiva, Tiresias, Themis, HiveD,
Optimus, Pollux, and AntMan~\cite{xiao2018gandiva,gu2019tiresias,mahajan2020themis,zhao2020hived,peng2018optimus,qiao2021pollux,xiao2020antman}
improve utilization, fairness, or elasticity in centrally managed
clusters; production traces reflect the same stable-tenancy
assumption~\cite{jeon2019analysis}. Pipeline and distributed
training systems~\cite{narayanan2019pipedream,huang2019gpipe,li2020pytorchddp}
and topology-aware co-design~\cite{wang2023topoopt} similarly
assume a non-volatile substrate. Spot-market schedulers address
cloud preemption~\cite{sharma2015spotcheck,harlap2017proteus},
and general-purpose managers such as Mesos, YARN, and
Borg~\cite{hindman2011mesos,vavilapalli2013yarn,verma2015borg}
manage offers and quotas. None of these systems model voluntary
provider reclaim or the network-coupled migration deadlines that
dominate campus VVNs.

\subsection{eBPF and Programmable Network Functions}

eBPF provides programmable end-host telemetry and enforcement
for packet processing and per-flow shaping~\cite{mccanne1993bpf,hoiland2018xdp,jouet2017bpfabric,saeed2017carousel}.
Datacenter transports and schedulers such as DCTCP, pFabric,
CONGA, PIAS, and Varys~\cite{alizadeh2010dctcp,alizadeh2013pfabric,alizadeh2014conga,bai2015pias,chowdhury2014varys}
optimize congestion or coflow completion in managed fabrics.
P3 does not introduce a new congestion-control primitive; it
adds reclaim-deadline admission/allocation and enforces those
decisions at provider edges during GPU migration.

\section{Measurement Study}
\label{sec:measurement}

\subsection{Deployment Setup}

We deployed a \sys prototype across 54 GPU nodes in four
buildings of a research-university campus over two months:
20 RTX~3090, 15 RTX~4090, 8 A100 (4 GPUs each), 10 A6000
(2 GPUs each), plus one coordinator. Nodes span access-
and distribution-layer segments on a three-tier campus
network (1\,Gbps access, 1--10\,Gbps distribution, 10\,Gbps
core). Provider agents logged departure events, checkpoint
sizes, and per-flow bandwidth samples from eBPF TC BPF
programs attached to each node's egress interface.

\subsection{Findings}

\textbf{F1: Departure patterns are bursty and predictable.}
Emergency (no-notice) departures account for 42\% of all
events, concentrated in two daily windows
(11:30--13:00 and 17:30--20:00); $\lambda_e(t)$ varies up
to $6\times$ between peak and off-peak hours
(Fig.~\ref{fig:departure_rate}). This time-of-day structure
motivates P1's non-homogeneous Poisson model: a
fixed-interval strategy tuned for peak hours
over-checkpoints during quiet periods by up to
$\sim$4.1$\times$.
\textbf{F2: Checkpoint sizes span three orders of magnitude.}
Payloads range from $\sim$120\,MB (ResNet-50 dense
state-dict) to 48\,GB (full-precision LLM state-dicts with
optimizer state), with a 3.2\,GB median across the mixed
workload of small models, PEFT/LoRA fine-tuning, and
full-state training (Fig.~\ref{fig:ckpt_size_cdf});
1\,Gbps-link transmission ranges from $<$2\,s to $>$6\,min,
showing that a topology-unaware selector routing across the
10\,Gbps core can be \emph{slower} than a same-building hop
when the core is loaded.

\begin{figure}[t]
  \centering
  \begin{subfigure}[t]{0.48\columnwidth}
    \includegraphics[width=\linewidth]{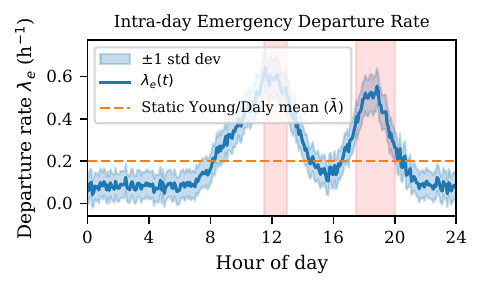}
    \caption{$\lambda_e(t)$ over two months.}
    \label{fig:departure_rate}
  \end{subfigure}\hfill
  \begin{subfigure}[t]{0.48\columnwidth}
    \includegraphics[width=\linewidth]{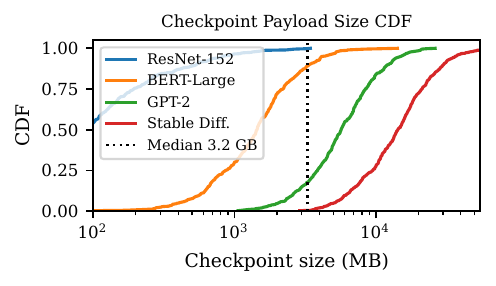}
    \caption{Checkpoint payload CDF.}
    \label{fig:ckpt_size_cdf}
  \end{subfigure}
  \caption{Measurement findings: (a) intra-day reclaim hazard and (b) checkpoint-size spread motivate adaptive intervals and topology-aware destination selection.}
  \label{fig:measure}
\end{figure}

\textbf{F3: Bandwidth is highly dynamic.}
eBPF measurements show intra-building $B_{\text{local}}$
averaging 820\,Mbps versus core-crossing $B_{\text{core}}$
240\,Mbps at peak ($r\approx 3.4$); per-link $B(t)$
fluctuates by $\pm$60\% over 10-minute windows, requiring
real-time measurement for accurate cost prediction in both
P1 and P2.
\textbf{F4: Hazard and bandwidth are anti-correlated.}
Peak reclaim hours (lunch, evening) coincide with the worst
campus bandwidth because human activity drives both:
$\varrho_{\lambda,B}=\mathrm{corr}(\lambda_e,B_{\text{eff}})
=-0.43$ (95\% bootstrap CI $[-0.51,-0.34]$, $n{=}6048$
buckets), with $\mathrm{corr}(\lambda_e,1/B_{\text{eff}})
=+0.39$ and CVs $\mathrm{CV}_\lambda=0.62$,
$\mathrm{CV}_{1/B}=0.31$. This violates the independence
assumption underlying classical periodic-checkpoint
analyses~\cite{young1974,daly2006}, which optimize against
$\mathbb{E}[\lambda]$ and $\mathbb{E}[B]$ separately and
under-estimate the cost of long intervals when $\lambda$ and
$1/B$ co-vary positively. P1 exploits this joint structure
as its core algorithmic increment over fixed-interval
schedules (\S\ref{sec:p1corr}, Thm.~\ref{thm:p1corr}).
Together, F1--F4 plus the observation of up to 7 simultaneous
departures drive the three protocol designs: F1's $\lambda_e(t)$
dynamics motivate P1's time-varying interval; F2's payload
spread motivates P2's topology-aware selector; concurrent
departures motivate P3's concurrency control.

\section{P1: Reclaim-Aware Checkpoint Scheduling}
\label{sec:p1}

\subsection{Reclaim-Aware System Model}

P1 separates the two reclaim modes. Emergency reclaim
($\tau=0$) cannot rely on the source remaining available,
so recovery uses the freshest checkpoint that has already
completed. Scheduled reclaim ($\tau>0$) first tests final
checkpoint feasibility: a zero-loss handoff is possible only
if $C/B_{\text{eff}}(t)+T_r \leq \tau$, where $C$ is the
checkpoint payload size, $T_r$ is restart time, and
$B_{\text{eff}}(t)=\min(W,B(t))$ is the effective transfer
throughput from local write speed $W$ and measured network
bottleneck $B(t)$.

For emergency protection, P1 treats $\lambda_e(t)$ and
$B_{\text{eff}}(t)$ as piecewise-stationary over one control
epoch. This is the operational assumption behind the closed
form: the local interval is recomputed at
each checkpoint, not a claim that a single interval is globally
optimal for an arbitrary non-homogeneous process. The
per-unit-time expected slowdown cost of a checkpoint interval
$\Delta t$ is:
\begin{equation}
  \mathcal{L}(\Delta t, t)
  = \underbrace{\lambda_e(t) \cdot \frac{\Delta t}{2}}_{\text{expected work loss}}
  + \underbrace{\frac{C}{B_{\text{eff}}(t) \cdot \Delta t}}_{\text{checkpoint I/O overhead}}
  \label{eq:p1cost}
\end{equation}
The first term grows linearly in $\Delta t$ (more work
lost per departure when intervals are long) and the second
decays as $1/\Delta t$ (checkpointing overhead amortized
over longer intervals).

\subsection{Local Checkpoint Interval}

\begin{lemma}[Local Reclaim-Aware Optimal Interval]
\label{lem:p1local}
Within a control epoch where $\lambda_e(t)$ and
$B_{\text{eff}}(t)$ are treated as constant estimates, the
checkpoint interval that minimizes
$\mathcal{L}(\Delta t, t)$ in \eqref{eq:p1cost} is:
\begin{equation}
  \Delta t^*(t) = \sqrt{\frac{2C}{\lambda_e(t) \cdot B_{\text{eff}}(t)}}
  \label{eq:p1opt}
\end{equation}
with optimal cost $\mathcal{L}^* = \sqrt{2\lambda_e C /
B_{\text{eff}}}$.
\end{lemma}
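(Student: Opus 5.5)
With $\lambda_e$ and $B_{\text{eff}}$ frozen at their epoch estimates, I treat \eqref{eq:p1cost} as a one-variable problem. Write
\[
\mathcal{L}(\Delta t)=a\,\Delta t+\frac{b}{\Delta t},\qquad a=\frac{\lambda_e(t)}{2},\quad b=\frac{C}{B_{\text{eff}}(t)},
\]
on the open half-line $\Delta t\in(0,\infty)$. I assume $\lambda_e(t)>0$, $C>0$ and $B_{\text{eff}}(t)>0$, so that $a,b>0$. These are implicit in the statement: if $\lambda_e=0$, no interior optimum exists, and the formula correctly degenerates to $\Delta t^*=\infty$. I will state this nondegeneracy hypothesis explicitly at the start.

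The cleanest route is AM--GM, which gives both the minimizer and the minimum value at once. For $\Delta t>0$,
\[
a\,\Delta t+\frac{b}{\Delta t}\;\ge\;2\sqrt{a\,\Delta t\cdot \frac{b}{\Delta t}}=2\sqrt{ab},
\]
with equality iff $a\,\Delta t=b/\Delta t$, i.e.\ iff $\Delta t=\sqrt{b/a}$. Substituting back gives $\sqrt{b/a}=\sqrt{2C/(\lambda_e B_{\text{eff}})}$, which is \eqref{eq:p1opt}. It also gives
\[
2\sqrt{ab}=2\sqrt{\frac{\lambda_e C}{2B_{\text{eff}}}}=\sqrt{\frac{2\lambda_e C}{B_{\text{eff}}}}=\mathcal{L}^*.
\]
The equality condition shows directly that the minimizer is unique.

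As a cross-check, I will also note the calculus argument. We have $\mathcal{L}'(\Delta t)=a-b/\Delta t^2$, and $\mathcal{L}''(\Delta t)=2b/\Delta t^3>0$ on $(0,\infty)$. Hence $\mathcal{L}$ is strictly convex, and its unique stationary point $\Delta t=\sqrt{b/a}$ is the global minimizer. The boundary behaviour, $\mathcal{L}\to\infty$ both as $\Delta t\to0^+$ and as $\Delta t\to\infty$, confirms that the minimum is attained in the interior.

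There is no real technical obstacle here. The only points needing care are bookkeeping ones. First, the constant $2$ must be tracked correctly: it comes from the $\Delta t/2$ expected-loss term, which is why $\Delta t^*$ contains $2C$ rather than $C$. Second, I must make clear that the claim is epoch-local, so optimality is with respect to the frozen surrogate \eqref{eq:p1cost} and not the true non-homogeneous process. That is exactly how the lemma is phrased, so no extra argument is required.
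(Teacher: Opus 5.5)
Your proof is correct. The calculus cross-check is exactly the paper's argument: the first-order condition $\lambda_e/2 - C/(B_{\text{eff}}\Delta t^2)=0$ together with $\partial^2\mathcal{L}/\partial(\Delta t)^2 = 2C/(B_{\text{eff}}\Delta t^3)>0$.

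Your primary AM--GM route is a slightly different but equally elementary path. A single inequality gives the minimizer, the optimal value $\mathcal{L}^*=2\sqrt{ab}=\sqrt{2\lambda_e C/B_{\text{eff}}}$, and uniqueness via the equality case. The paper derives only $\Delta t^*$ and leaves $\mathcal{L}^*$ to substitution. You also fill in details the paper leaves implicit: the positivity hypotheses $\lambda_e, C, B_{\text{eff}}>0$, and the fact that convexity holds on all of $(0,\infty)$, which is what actually justifies calling the stationary point a global minimum.
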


\begin{proof}
See Appendix~\ref{app:proofs}.
\end{proof}

\noindent
Lemma~\ref{lem:p1local} is the local control law inside P1.
It generalizes the classical periodic-checkpoint
formula~\cite{young1974} to the dynamic network-constrained
setting: $B_{\text{eff}}(t)$ replaces the implicit constant
write-speed, and $\lambda_e(t)$ is re-estimated from observed
reclaim behavior rather than assumed as a stationary hardware
failure rate.

\subsection{Network-Coupled Checkpoint Bandwidth Allocation}

The local rule is insufficient when multiple jobs share a
campus bottleneck: if $K$ checkpoints start together, the
bandwidth each job sees is an allocation decision rather than
an exogenous constant. P1 therefore solves a per-epoch
bandwidth-allocation problem before instantiating the
intervals in Lemma~\ref{lem:p1local}. Let $b_i$ be the
checkpoint bandwidth assigned to job $i$, with
$\sum_i b_i \leq B_{\mathrm{ckpt}}$, where
$B_{\mathrm{ckpt}}$ is the checkpoint bandwidth budget after
P3's research-traffic reservation. For job $i$ with
emergency reclaim hazard $\lambda_i$ and checkpoint payload
$C_i$, the local cost under assigned bandwidth $b_i$ is:
\begin{equation}
  \mathcal{L}_i(\Delta_i,b_i)
  = \frac{\lambda_i\Delta_i}{2}
  + \frac{C_i}{b_i\Delta_i}.
  \label{eq:p1multi-cost}
\end{equation}

\begin{theorem}[Network-Coupled Checkpoint Allocation]
\label{thm:p1coupled}
Within a control epoch, assume $K$ active checkpointing jobs
share a checkpoint bandwidth budget $B_{\mathrm{ckpt}}$ and
that no local write-speed cap is binding. The allocation and
intervals that minimize
\begin{equation*}
  \sum_{i=1}^{K}\mathcal{L}_i(\Delta_i,b_i)
  \quad
  \text{s.t.}\quad
  \sum_{i=1}^{K} b_i \leq B_{\mathrm{ckpt}},\; b_i>0
\end{equation*}
are:
\begin{equation}
  b_i^*
  =
  B_{\mathrm{ckpt}}\,
  \frac{(\lambda_i C_i)^{1/3}}
       {\sum_{j=1}^{K}(\lambda_j C_j)^{1/3}},
  \label{eq:p1cube}
\end{equation}
and
\begin{equation}
  \Delta_i^*
  =
  \sqrt{\frac{2C_i}{\lambda_i b_i^*}}.
  \label{eq:p1multi-delta}
\end{equation}
The minimized aggregate cost is
$\sqrt{2/B_{\mathrm{ckpt}}}\,
(\sum_i(\lambda_iC_i)^{1/3})^{3/2}$.
\end{theorem}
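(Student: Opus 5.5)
The proof reduces the joint problem to a one-dimensional allocation by minimizing over the intervals first, with the bandwidths held fixed. For a fixed $b_i>0$, the $i$-th summand $\mathcal{L}_i(\Delta_i,b_i)$ depends only on $\Delta_i$. It has exactly the form of \eqref{eq:p1cost}, with $\lambda_e\to\lambda_i$, $C\to C_i$ and $B_{\text{eff}}\to b_i$, and the assumption that no write-speed cap binds means $b_i$ plays the role of $B_{\text{eff}}$. Lemma~\ref{lem:p1local} therefore gives the inner minimizer $\Delta_i^*(b_i)=\sqrt{2C_i/(\lambda_i b_i)}$ and the inner value $\sqrt{2\lambda_i C_i/b_i}$. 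Because the objective is separable in the $\Delta_i$ once $b$ is fixed, $\min_{b,\Delta}\sum_i\mathcal{L}_i=\min_b\sum_i a_i b_i^{-1/2}$, where $a_i:=\sqrt{2\lambda_iC_i}$. This already yields \eqref{eq:p1multi-delta}, once the optimal $b_i^*$ is substituted.

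Next we solve the outer problem $\min\sum_i a_i b_i^{-1/2}$ subject to $\sum_i b_i\le B_{\mathrm{ckpt}}$ and $b_i>0$.
\begin{itemize}
\item Each term is strictly decreasing in $b_i$, so the budget constraint is tight at any optimum.
\item Each term is strictly convex on $(0,\infty)$, so the KKT conditions are sufficient and the minimizer is unique.
\item Stationarity of the Lagrangian, $\tfrac12 a_i b_i^{-3/2}=\mu$, gives $b_i\propto a_i^{2/3}=2^{1/3}(\lambda_iC_i)^{1/3}$.
\item Normalizing with $\sum_i b_i=B_{\mathrm{ckpt}}$ cancels the common factor $2^{1/3}$ and yields \eqref{eq:p1cube}.
\end{itemize}

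For the minimized value, write $S=\sum_j a_j^{2/3}$, so that $b_i^*=B_{\mathrm{ckpt}}\,a_i^{2/3}/S$. Substituting gives
\[
\sum_i a_i (b_i^*)^{-1/2}=\sum_i a_i^{2/3}\,\sqrt{S/B_{\mathrm{ckpt}}}=S^{3/2}/\sqrt{B_{\mathrm{ckpt}}}.
\]
Since $S=2^{1/3}\sum_j(\lambda_jC_j)^{1/3}$, we have $S^{3/2}=\sqrt2\,(\sum_j(\lambda_jC_j)^{1/3})^{3/2}$, which gives the stated cost $\sqrt{2/B_{\mathrm{ckpt}}}\,(\sum_i(\lambda_iC_i)^{1/3})^{3/2}$.

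The main obstacle is global optimality on the open domain $b_i>0$, since the Lagrange argument only identifies a stationary point. I would close this gap with a direct Hölder lower bound instead of relying on convexity alone. Apply Hölder with exponents $3/2$ and $3$ to
\[
\sum_i a_i^{2/3}=\sum_i \bigl(a_i b_i^{-1/2}\bigr)^{2/3} b_i^{1/3}.
\]
This gives $S\le(\sum_i a_ib_i^{-1/2})^{2/3}(\sum_i b_i)^{1/3}$, and hence $\sum_i a_ib_i^{-1/2}\ge S^{3/2}/\sqrt{B_{\mathrm{ckpt}}}$ for every feasible $b$. Equality in Hölder holds exactly when $b_i\propto a_i^{2/3}$ and the budget is tight. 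This argument simultaneously certifies the value, the uniqueness of the allocation, and, via the inner step, the optimality of the intervals.
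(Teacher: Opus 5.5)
Your proposal is correct and follows the paper's proof: you minimize over $\Delta_i$ for fixed $b_i$ via Lemma~\ref{lem:p1local}, then solve the convex outer problem in $b$ by KKT to obtain the cube-root allocation. You go slightly beyond the paper by explicitly verifying the stated minimized aggregate cost, which the paper's proof leaves unchecked, and by adding a H\"older certificate. The H\"older step is a nice self-contained check but is not needed to close a gap: the outer objective is strictly convex and the constraints are affine, so the KKT point is already the unique global minimizer on the open domain $b_i>0$.
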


\begin{proof}
See Appendix~\ref{app:proofs}.
\end{proof}

\noindent
Theorem~\ref{thm:p1coupled} is the P1 protocol rule that is
absent from independent per-job timers: checkpoint
bandwidth is not split equally, nor linearly by checkpoint
size. The cube-root allocation gives more bandwidth to jobs
with higher reclaim hazard and larger checkpoints, but with
diminishing returns, so a single large or risky job cannot
starve the rest.

\begin{lemma}[Box-Constrained Allocation]
\label{lem:p1box}
Let $W_i$ denote a per-job write-speed or administrator cap
on $b_i$, and consider
$\min\sum_i\sqrt{2\lambda_iC_i/b_i}$ s.t.\
$\sum_i b_i\leq B_{\mathrm{ckpt}}$, $0<b_i\leq W_i$.
The following water-filling iteration terminates in at most
$K$ steps and returns the global optimum: (1) solve the
unconstrained problem via~\eqref{eq:p1cube}; (2) for every
job whose unconstrained $b_i^*>W_i$, fix $b_i=W_i$, subtract
$W_i$ from $B_{\mathrm{ckpt}}$, and remove $i$ from the
active set; (3) re-solve over the residual active set and
residual budget; repeat until no cap is violated.
\end{lemma}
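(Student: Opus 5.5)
The plan is to recast the problem as a separable convex program and show that the water-filling output satisfies the KKT conditions. Write $a_i=\sqrt{2\lambda_iC_i}$. The objective is then $f(b)=\sum_i a_i b_i^{-1/2}$, which is strictly convex and strictly decreasing in each $b_i$ on $b_i>0$. The feasible set is a polytope intersected with the open orthant, and $f$ blows up as any $b_i\to 0$. So a unique minimizer exists, and Slater's condition holds. KKT is therefore necessary and sufficient. First I dispose of the trivial case $\sum_i W_i\le B_{\mathrm{ckpt}}$. There $b_i=W_i$ for all $i$ is optimal, because $f$ is decreasing coordinatewise, and the iteration reaches it as well. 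Otherwise the budget binds at the optimum.

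Next I derive the KKT characterization. With multiplier $\nu>0$ on the budget and $\mu_i\ge 0$ on the caps, stationarity gives $\tfrac12 a_i b_i^{-3/2}=\nu+\mu_i$. Hence the optimum has the form $b_i=\min\{W_i,\,\theta w_i\}$, where $w_i=a_i^{2/3}=(2\lambda_iC_i)^{1/3}\propto(\lambda_iC_i)^{1/3}$ and the water level is $\theta=(2\nu)^{-2/3}$. The level $\theta$ is pinned down by $\sum_i b_i=B_{\mathrm{ckpt}}$. Equivalently, a feasible point is optimal iff two conditions hold for a common $\theta$:
\begin{itemize}
\item every uncapped job has $b_i=\theta w_i$;
\item every capped job has $W_i\le\theta w_i$.
\end{itemize}
On any active set $A$ with residual budget $R$, the unconstrained step \eqref{eq:p1cube} of Theorem~\ref{thm:p1coupled} is exactly $b_i=\theta_A w_i$ with $\theta_A=R/\sum_{j\in A}w_j$.

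The key step is the monotonicity of the water level across rounds. Suppose in some round the set $S\subseteq A$ violates its caps, i.e.\ $W_i<\theta_A w_i$ for $i\in S$. The new level is
\[
\theta_{A\setminus S}=\frac{R-\sum_{i\in S}W_i}{\sum_{j\in A\setminus S}w_j}.
\]
Since $R-\sum_{i\in S}W_i > R-\theta_A\sum_{i\in S}w_i=\theta_A\sum_{j\in A\setminus S}w_j$, we get $\theta_{A\setminus S}>\theta_A$. So the level strictly increases each round, and a job fixed at its cap in an earlier round still satisfies $W_i<\theta w_i$ at the final level $\theta$. This is exactly the nonnegativity $\mu_i\ge 0$ needed for capped jobs. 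At termination the remaining active jobs satisfy $\theta w_i\le W_i$ with $b_i=\theta w_i$, and the budget is met with equality. So the KKT conditions hold, and by convexity the output is the global optimum.

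Termination in at most $K$ steps is immediate. Every non-final round removes at least one job from the active set. The residual budget also stays positive: while $\sum_{j\in A}W_j>R$, some active job has $\theta_A w_j\le W_j$, so not all active jobs can violate their caps.

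I expect the main obstacle to be the monotonicity argument. It is what rules out the concern that fixing a job at its cap was "premature". One also needs care with ties where $\theta w_i=W_i$ exactly, but these are harmless because such jobs satisfy both the capped and uncapped KKT forms.
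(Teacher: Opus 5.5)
Your proof is correct and follows essentially the same route as the paper's: separable convexity, the KKT form $b_i=\min\{W_i,\theta w_i\}$ with the cube-root rule on the uncapped set, and termination because every non-final round shrinks the active set. Your substantive addition is the strict increase of the water level $\theta_A$, which guarantees $\mu_i\geq 0$ for jobs fixed at their caps in earlier rounds. The paper's sketch skips this dual-feasibility check and simply asserts that the returned point satisfies KKT; your treatment of the case $\sum_i W_i\leq B_{\mathrm{ckpt}}$ and of the nonempty final active set also fills in details the paper leaves implicit.
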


\begin{proof}[Proof sketch]
See Appendix~\ref{app:proofs}.
\end{proof}

If a local write cap $W_i$ or an
administrator cap is binding, P1 invokes Lemma~\ref{lem:p1box};
the same constrained clipping used in
Eq.~\eqref{eq:p1constrained} below also enforces loss
budgets and scheduled-reclaim deadlines.

\subsection{Adaptivity Gap of State-Dependent Checkpointing}
\label{sec:p1corr}

Let $(\Lambda, \Theta)$ denote the joint stationary
distribution of $(\lambda_e(t), 1/B_{\text{eff}}(t))$ over an
operational horizon, with $\lambda_e, B_{\text{eff}}>0$ a.s.\
and finite first moments. Define two policy classes:
\emph{state-dependent} $\pi_{\text{dep}}$ samples the
current $(\lambda_e, B)$ at each epoch and applies
Lemma~\ref{lem:p1local}'s closed form (in multi-workload
epochs, P1 first applies Theorem~\ref{thm:p1coupled} and
treats the assigned $b_i$ as the job's effective bandwidth);
\emph{state-independent} $\pi_{\text{ind}}$ commits to a
single $\Delta t_0$ chosen optimally from the marginals
$(\bar\lambda_e, \overline{1/B})$, the \emph{best}
a stationary periodic-checkpoint analysis can achieve, even
with full distributional knowledge, because its cost model
factors through $\mathbb{E}[\lambda]$ and $\mathbb{E}[1/B]$
only.

\begin{theorem}[Adaptivity Gap]
\label{thm:p1corr}
Let $W := \sqrt{\lambda_e/B_{\text{eff}}}$. The expected
per-unit-time cost ratio satisfies
\begin{equation}
  \frac{\mathbb{E}[\mathcal{L}_{\text{ind}}^*]}{\mathbb{E}[\mathcal{L}_{\text{dep}}^*]}
  \;=\;
  \frac{\sqrt{\,\mathbb{E}[\lambda_e]\,\mathbb{E}[1/B_{\text{eff}}]\,}}
       {\mathbb{E}\!\left[\sqrt{\lambda_e/B_{\text{eff}}}\,\right]}
  \;\geq\; 1,
  \label{eq:gap-exact}
\end{equation}
where the numerator is the square root of the product
$\mathbb{E}[\lambda_e]\,\mathbb{E}[1/B_{\text{eff}}]$ and the
denominator is the expectation of the random variable
$W:=\sqrt{\lambda_e/B_{\text{eff}}}$; the inequality is exactly
the Cauchy--Schwarz bound on $(\sqrt{\lambda_e},\sqrt{1/B_{\text{eff}}})$
in $L^{2}(\mathbb{P})$.
with equality iff $\lambda_e B_{\text{eff}}$ is a.s.\
constant. Equivalently, the absolute gap is
\begin{equation}
  \mathbb{E}[\mathcal{L}_{\text{ind}}^*]
  - \mathbb{E}[\mathcal{L}_{\text{dep}}^*]
  \;=\; \sqrt{2C}\cdot
  \frac{\Delta_{\mathrm{CS}}}
       {\sqrt{\bar\lambda\,\bar\Theta} + \mathbb{E}[W]},
  \label{eq:gap-quant}
\end{equation}
where $\bar\Theta:=\mathbb{E}[1/B_{\text{eff}}]$ and the
$L^{2}(\mathbb{P})$ Cauchy--Schwarz defect of
$(\sqrt{\lambda_e},\sqrt{1/B_{\text{eff}}})$ is
\begin{equation}
  \Delta_{\mathrm{CS}}
  \;:=\; \mathbb{E}[\lambda_e]\cdot\mathbb{E}[1/B_{\text{eff}}]
  \;-\; \big(\mathbb{E}[W]\big)^{2}
  \;\geq\; 0.
  \label{eq:cs}
\end{equation}
\end{theorem}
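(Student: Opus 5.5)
I would first compute the two expected costs explicitly from the per-epoch cost~\eqref{eq:p1cost}, then compare them with Cauchy--Schwarz, and finally rewrite the difference by rationalizing it.

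\textbf{State-dependent policy.} At each epoch $\pi_{\text{dep}}$ applies Lemma~\ref{lem:p1local} to the current state. So its realized cost is $\mathcal{L}^*_{\text{dep}}=\sqrt{2C\lambda_e/B_{\text{eff}}}=\sqrt{2C}\,W$, and therefore $\mathbb{E}[\mathcal{L}^*_{\text{dep}}]=\sqrt{2C}\,\mathbb{E}[W]$. Here $\mathbb{E}[W]$ is finite, because $\mathbb{E}[W]\le\sqrt{\mathbb{E}[\lambda_e]\mathbb{E}[1/B_{\text{eff}}]}$; I read the finite-first-moment hypothesis as covering $\Theta=1/B_{\text{eff}}$.

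\textbf{State-independent policy.} $\pi_{\text{ind}}$ uses one fixed $\Delta t_0$, so its expected cost is linear in the marginal means:
\[
\mathbb{E}[\mathcal{L}(\Delta t_0)]=\bar\lambda\,\Delta t_0/2+C\bar\Theta/\Delta t_0 .
\]
This is the point where the cost "factors through marginals". Minimizing over $\Delta t_0$ with the same calculus as Lemma~\ref{lem:p1local} gives $\Delta t_0=\sqrt{2C/(\bar\lambda\bar\Theta)}$ and $\mathbb{E}[\mathcal{L}^*_{\text{ind}}]=\sqrt{2C\,\bar\lambda\,\bar\Theta}$.

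\textbf{The ratio \eqref{eq:gap-exact}.} Dividing the two costs, the $\sqrt{2C}$ factors cancel and the ratio is as stated. The inequality is Cauchy--Schwarz in $L^2(\mathbb{P})$ applied to $X=\sqrt{\lambda_e}$ and $Y=\sqrt{\Theta}$:
\[
\mathbb{E}[XY]\le\sqrt{\mathbb{E}[X^2]\,\mathbb{E}[Y^2]}.
\]
Since $XY=W$, this is exactly the claim.

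\textbf{Equality case.} Equality holds iff $X$ and $Y$ are a.s.\ linearly dependent. Both are a.s.\ positive, so this means $\sqrt{\lambda_e}=c\sqrt{1/B_{\text{eff}}}$ a.s.\ for some $c>0$, i.e.\ $\lambda_eB_{\text{eff}}=c^2$ is a.s.\ constant. Conversely, if $\lambda_eB_{\text{eff}}\equiv k$, then $W=\sqrt{k}\,\Theta$ and $\bar\lambda=k\bar\Theta$, so both sides equal $\sqrt{k}\,\bar\Theta$.

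\textbf{The absolute gap \eqref{eq:gap-quant}.} Write $a=\sqrt{\bar\lambda\bar\Theta}$ and $b=\mathbb{E}[W]$. Then the gap is
\[
\sqrt{2C}\,(a-b)=\sqrt{2C}\,\frac{a^2-b^2}{a+b},
\]
and $a^2-b^2=\Delta_{\mathrm{CS}}$. Nonnegativity of $\Delta_{\mathrm{CS}}$ is the same Cauchy--Schwarz step. The denominator $a+b$ is strictly positive because $\lambda_e,B_{\text{eff}}>0$ a.s.

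\textbf{Main obstacle.} The only delicate point is interpretive rather than computational. One must justify that $\mathbb{E}[\mathcal{L}^*_{\text{dep}}]$ is the expectation of the pointwise optimum. That requires the piecewise-stationary epoch assumption from Lemma~\ref{lem:p1local}, so each epoch's cost is evaluated at its frozen state, and treating time averages as expectations under the stationary law of $(\Lambda,\Theta)$. I would state that assumption explicitly and then proceed as above.
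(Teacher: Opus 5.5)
Your proof is correct and follows the paper's argument essentially step for step: the pointwise optimum $\sqrt{2C}\,W$ for $\pi_{\text{dep}}$, linearity of the fixed-interval cost in $(\lambda_e,1/B_{\text{eff}})$ giving $\sqrt{2C\bar\lambda\bar\Theta}$ for $\pi_{\text{ind}}$, Cauchy--Schwarz on $(\sqrt{\lambda_e},\sqrt{1/B_{\text{eff}}})$ with its equality case, and the identity $a-b=(a^2-b^2)/(a+b)$. One small slip: the minimizing interval is $\Delta t_0=\sqrt{2C\bar\Theta/\bar\lambda}$, not $\sqrt{2C/(\bar\lambda\bar\Theta)}$, because $\bar\Theta$ stands in for $1/B_{\text{eff}}$ rather than $B_{\text{eff}}$; the minimum value $\sqrt{2C\bar\lambda\bar\Theta}$ that you actually use is correct, so nothing downstream changes.
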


\begin{proof}
See Appendix~\ref{app:proofs}.
\end{proof}

\noindent
\textbf{Interpretation.} A stationary fixed-interval analysis
treats $\lambda$ and $B$ as scalar nominal values, so
$\Delta_{\mathrm{CS}}\equiv 0$ by construction and the
adaptivity gap is structurally invisible.
Theorem~\ref{thm:p1corr} expresses the benefit of
state-dependent interval selection as a directly
\emph{measurable} functional
$(\bar\lambda,\bar\Theta,\mathbb{E}[W])$ of the deployment
trace, observable from any deployment that estimates
$\lambda_e$ and $B_{\text{eff}}$ online. Plugging the
two-month calibration statistics into~\eqref{eq:gap-exact}
predicts a $\sim$5.2\% gap, which we verify empirically
against baseline B7 in Table~\ref{tab:main_results}. A
detailed binding-regime analysis (\textit{B}-binding vs.\
\textit{W}-binding epochs) appears in
Appendix~\ref{app:protocol-details}.

\subsection{Constrained Solution and Online Algorithm}

Three reclaim-induced constraints bound the unconstrained
optimum after Theorem~\ref{thm:p1coupled} assigns
checkpoint bandwidth $b_i$:
\emph{network feasibility} ($\Delta_i \geq C_i/(\beta b_i)$,
$\beta{=}0.2$; see below), \emph{loss budget}
($\Delta_i \leq 2L_{\max,i}$), and \emph{scheduled-reclaim
deadline} ($C_i/b_i+T_{r,i}\leq\tau_i$ for notice
$\tau_i$). The constrained per-job solution is
\begin{equation}
  \Delta_{i,c}^*(t) =
  \text{clip}\!\left(\sqrt{\frac{2C_i}{\lambda_i(t)b_i}},\;
  \frac{C_i}{\beta b_i},\; 2L_{\max,i}\right).
  \label{eq:p1constrained}
\end{equation}

\noindent\textbf{Operational meaning of $\beta$.}
The floor $C_i/(\beta b_i)$ enforces a budget split on the
allocated egress slice $b_i$: at most a fraction $\beta$ of
$b_i$ is consumed by checkpoint writes, leaving $(1-\beta)b_i$
for live gradient/activation traffic of co-located training
jobs. We set $\beta=0.2$ so that checkpoint I/O never steals
more than $20\%$ of the slice, keeping all-reduce tail
latency within the SLO window observed in
\S\ref{sec:measurement} (F2--F3). Smaller $\beta$ throttles
checkpoint frequency (raising recompute loss after
preemption); larger $\beta$ inflates communication tail and
violates fairness. The choice is robust: a sensitivity sweep
over $\beta\in\{0.1,0.2,0.3\}$ in
Appendix~\ref{app:eval-extra} changes end-to-end
goodput by $<2\%$, with $\beta=0.2$ giving the best
loss--throughput trade-off.
The online controller \textsc{Adaptive-Ckpt}
(Algorithm~\ref{alg:p1})
estimates $\lambda_e(t)$ via sliding-window MLE over the
departure log and samples $B(t)$ from eBPF egress counters
at 100\,ms intervals; deadline checks use a lower-confidence
bound $\hat B^-$.

\begin{algorithm}[t]
\caption{\textsc{Adaptive-Ckpt}: network-coupled P1 controller}
\label{alg:p1}
\small
\begin{algorithmic}[1]
\Require active jobs $\mathcal{J}$ with $(C_i,L_{\max,i},\hat\lambda_i)$;
  link budget $\hat B^{-}_{\text{ckpt}}(t)$; reclaim notice $\tau_i$
\Procedure{Tick}{$t$}
  \State $S \gets \sum_{j\in\mathcal{J}}(\hat\lambda_j^{+}(t)\,C_j)^{1/3}$
        \Comment{cube-root coupling, Thm.~\ref{thm:p1coupled}}
  \ForAll{$i\in\mathcal{J}$}
    \State $b_i \gets \hat B^{-}_{\text{ckpt}}(t)\cdot
            (\hat\lambda_i^{+}(t)\,C_i)^{1/3}/S$
    \State $b_i \gets \textsc{ClipAndRedistribute}(b_i)$
           \Comment{NIC \& admin caps}
    \If{$\tau_i \neq \bot \wedge C_i/b_i + T_{r,i} \leq \tau_i$}
      \State \textbf{emit} final-ckpt$(i)$; \textbf{continue}
    \EndIf
    \State $\Delta_i \gets
      \mathrm{clip}\!\left(\sqrt{2C_i/(\hat\lambda_i^{+}(t)\,b_i)},\,
      \tfrac{C_i}{\beta b_i},\,2L_{\max,i}\right)$
    \State \textbf{export} $(i,C_i,b_i,\Delta_i,L_{\max,i})\to$ P3
  \EndFor
\EndProcedure
\Loop\, every control period: \Call{Tick}{$t$}
\EndLoop
\end{algorithmic}
\end{algorithm}

\subsection{Multi-Workload Coordination}

When $K$ jobs are co-resident, independent checkpoint timers
can synchronize and collapse per-job bandwidth to roughly
$B(t)/K'$ ($K'\leq K$ active transferrers). Theorem~\ref{thm:p1coupled}
prevents this by converting the shared bottleneck into
per-job rate demands before timers fire. P1 exports each
checkpoint as a demand tuple
$(i,C_i,b_i,\Delta_i,L_{\max,i})$ to P3, which staggers
non-urgent checkpoints, reserves deadline bandwidth for
scheduled reclaim events, and rejects infeasible overload
sets into degraded recovery (\S\ref{sec:p3}). This
allocation-plus-admission path is the protocol difference
from independent fixed-interval timers.

\section{P2: Volatility-Aware Destination Selection}
\label{sec:p2}

\subsection{Network Model}

We model the campus network as a weighted graph
$G=(V,E,w)$, where $V$ includes both servers and switches,
and $w(e,t)$ is the available bandwidth on link $e$ at
time $t$, measured passively by eBPF TC BPF programs
installed on each provider node.

For a scheduled reclaim at source $s$ with notice window
$\tau_s$, destination selection has a hard feasibility
constraint in addition to a cost objective. The migration
transfer time from source $s$ to destination $d$ is:
\begin{equation}
  T_{\text{mig}}(s,d,t) =
  \frac{C}{\text{bw}_{\text{bot}}(s,d,t)} + T_r
\end{equation}
where $\text{bw}_{\text{bot}}(s,d,t)=\min_{e\in
P(s,d)}w(e,t)$ is the path bottleneck and $T_r$ is the
container restart time. A destination is deadline-feasible
only if $T_{\text{mig}}(s,d,t) \leq \tau_s$; otherwise P2
excludes it for zero-loss handoff and marks the job for
degraded recovery from the latest completed checkpoint.

\subsection{Joint Scoring Function}

Minimizing only $T_{\text{mig}}$ is myopic: in a VVN,
the destination is itself an autonomous provider. If $d$
has high future reclaim hazard, the workload will
re-migrate shortly, incurring additional overhead. P2
therefore combines path cost, deadline feasibility, and
destination survival. Let
$S_d(T_{\text{rem}})=\exp(-\int_t^{t+T_{\text{rem}}}
h_d(u)\,du)$ be the probability that destination $d$
survives for the job's remaining runtime under estimated
hazard $h_d$. We minimize the one-step lookahead cost:
\begin{equation}
  \text{Score}(d) = T_{\text{mig}}(s,d,t)
  + \alpha \cdot (1-S_d(T_{\text{rem}}))
    \cdot \bar{T}_{\text{mig}}(d)
  \label{eq:p2score}
\end{equation}
where $T_{\text{rem}}$ is the estimated remaining job
runtime, $\bar{T}_{\text{mig}}(d)$ is the average
migration cost from $d$ to other candidates, and
$\alpha \in [0.5,2.0]$ is a (dimensionless) stability
penalty weight. We calibrate $\alpha$ once per deployment
by minimizing the held-out work loss of the P1+P2+P3 stack
on weeks 1--2 of the trace over the grid
$\{0.5,1.0,1.5,2.0\}$; the empirical minimum is $\alpha=1.0$
on our deployment. Re-calibration is only needed when the
ratio $B_{\text{local}}/B_{\text{core}}$ or the
inter-building hazard heterogeneity changes by more than
$\sim$30\%, an event we did not observe in the held-out
period (\S\ref{sec:eval}-E8).

The optimal destination is:
$d^* = \arg\min_{d \in \mathcal{D}_c(s)} \text{Score}(d)$,
where $\mathcal{D}_c(s)$ is the set of compatible
candidates satisfying VRAM, CUDA capability, load, and
deadline feasibility constraints.

\subsection{Locality Filtering Lemma}

\begin{lemma}[Hierarchical Locality Filter]
\label{thm:p2}
Assume a hierarchical campus window in which every
same-building feasible path from $s$ has bottleneck at least
$B_{\text{local}}^{-}$ and every cross-building path has
bottleneck at most $B_{\text{core}}^{+}$, with
$B_{\text{local}}^{-}\geq B_{\text{core}}^{+}$. If a
same-building candidate satisfies GPU and deadline
constraints, then the best same-building candidate has
transfer time no larger than the best cross-building
candidate for the current migration.
\end{lemma}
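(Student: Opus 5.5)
The argument is a direct monotonicity comparison through the transfer-time formula $T_{\text{mig}}(s,d,t)=C/\text{bw}_{\text{bot}}(s,d,t)+T_r$. I would first fix the migration instance: the source $s$, payload $C$, restart time $T_r$ and time $t$. All candidates share $C$ and $T_r$, so comparing transfer times reduces to comparing path bottlenecks. The map $b\mapsto C/b+T_r$ is strictly decreasing on $b>0$ when $C>0$, and constant when $C=0$. Hence larger bottleneck bandwidth never yields larger transfer time.

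Next I would split the compatible candidate set $\mathcal{D}_c(s)$ into $\mathcal{D}_{\text{loc}}$, the same-building candidates satisfying the GPU and deadline constraints, and $\mathcal{D}_{\text{x}}$, the cross-building candidates. By hypothesis $\mathcal{D}_{\text{loc}}\neq\emptyset$. Because the candidate set is finite, $d_{\text{loc}}=\arg\min_{d\in\mathcal{D}_{\text{loc}}}T_{\text{mig}}(s,d,t)$ exists.

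If $\mathcal{D}_{\text{x}}$ is empty, the claim holds vacuously (or by convention the infimum over an empty set is $+\infty$). Otherwise, let $d_{\text{x}}$ be any cross-building candidate, deadline-feasible or not. The comparison does not need $d_{\text{x}}$ to be feasible, so the statement holds even against the unfiltered best cross-building node.

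The window assumption then gives the chain
\[
T_{\text{mig}}(s,d_{\text{loc}},t)\le \frac{C}{B_{\text{local}}^{-}}+T_r\le \frac{C}{B_{\text{core}}^{+}}+T_r\le T_{\text{mig}}(s,d_{\text{x}},t).
\]
The first inequality uses $\text{bw}_{\text{bot}}(s,d_{\text{loc}},t)\ge B_{\text{local}}^{-}$. The middle inequality uses $B_{\text{local}}^{-}\ge B_{\text{core}}^{+}$. The last uses $\text{bw}_{\text{bot}}(s,d_{\text{x}},t)\le B_{\text{core}}^{+}$. Taking the minimum over $d_{\text{x}}$ finishes the proof.

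The only delicate step is interpretive rather than computational. One must make sure the hypotheses bound every relevant path at the same instant $t$; this is what "campus window" supplies. One must also make sure the lemma compares transfer time only, not $\text{Score}(d)$ from \eqref{eq:p2score}, because the survival term can reverse the ordering. I would state explicitly that the lemma justifies considering same-building candidates first, while the full score still decides among them and may prefer a remote node when local hazards are high.
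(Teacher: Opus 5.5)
Your proof is correct and follows essentially the same route as the paper's: bound every same-building transfer time above by $C/B_{\text{local}}^{-}+T_r$, bound every cross-building transfer time below by $C/B_{\text{core}}^{+}+T_r$, and compare the two using $B_{\text{local}}^{-}\geq B_{\text{core}}^{+}$. Your extra remarks are consistent with the paper's own caveat that the full survival-aware score still decides among candidates. These remarks are that an empty cross-building set is trivial, that the comparison holds even against unfiltered remote nodes, and that the lemma concerns $T_{\text{mig}}$ rather than $\text{Score}(d)$.
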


\begin{proof}[Proof sketch]
See Appendix~\ref{app:proofs}.
\end{proof}

\noindent
With a measured median ratio
$B_{\text{local}}/B_{\text{core}}\approx 3.4$ in
\S\ref{sec:measurement}, the lemma justifies using locality
as a candidate filter rather than a final decision rule. The
technical increment over conventional topology-aware
scheduling is the subsequent survival and deadline scoring:
P2 will reject a local node that is unstable or cannot receive
the checkpoint before $\tau_s$.

\subsection{Algorithm: TopoAware-Select}

P2's destination selector (Algorithm~\ref{alg:p2})
first applies the
feasibility filter (VRAM, capability, load, deadline),
then promotes same-building candidates via
Lemma~\ref{thm:p2} (falling back to the cross-building set
when fewer than $k_{\min}$ local options exist), and finally
scores each remaining destination by
$\sigma_d = C/\text{bw}_d + T_r +
\alpha(1-S_d)\,\bar{T}_{\text{mig}}(d)$ where $S_d$ is the
survival probability over the remaining runtime. The total
cost is $O(|\mathcal{D}_{\text{cand}}|\cdot L)$ with
$L\leq 3$ (three-tier hop count)---negligible compared to
the transfer itself. Bandwidth estimates are refreshed from
the shared eBPF measurement table on each invocation.

\begin{algorithm}[t]
\caption{\textsc{TopoAware-Select}: P2 destination scoring}
\label{alg:p2}
\small
\begin{algorithmic}[1]
\Require src $s$; spec $(C,r,c_{\min})$; remaining runtime $T_{\text{rem}}$;
  notice $\tau_s$; penalty $\alpha$
\Ensure destination $d^{*}$ or \textsc{fail}
\Function{Select}{$s, C, r, c_{\min}, T_{\text{rem}}, \tau_s$}
  \State $\mathcal{D}_c \gets \{d \neq s:
    \text{vram}(d)\!\geq\! r,\,
    \text{cap}(d)\!\geq\! c_{\min},\,
    \text{load}(d)\!\leq\!\theta,\,
    C/\text{bw}(s,d)+T_r\leq\tau_s\}$
  \If{$\mathcal{D}_c = \emptyset$} \Return \textsc{fail}
  \EndIf
  \State $\mathcal{D}_{\ell} \gets \{d\in\mathcal{D}_c:
    \text{same-bldg}(s,d)\}$
  \State $\mathcal{D} \gets \mathcal{D}_{\ell}$
    \textbf{if} $|\mathcal{D}_{\ell}| \geq k_{\min}$
    \textbf{else} $\mathcal{D}_c$
  \ForAll{$d \in \mathcal{D}$}
    \State $\text{bw}_d \gets \textsc{eBPF-Bottleneck}(s,d)$
    \State $S_d \gets \textsc{Survival}(d, T_{\text{rem}})$
    \State $\sigma_d \gets C/\text{bw}_d + T_r
       + \alpha(1-S_d)\,\bar{T}_{\text{mig}}(d)$
  \EndFor
  \State \Return $\arg\min_{d\in\mathcal{D}}\sigma_d$
\EndFunction
\end{algorithmic}
\end{algorithm}

\begin{figure}[t]
  \centering
  \includegraphics[width=\columnwidth]{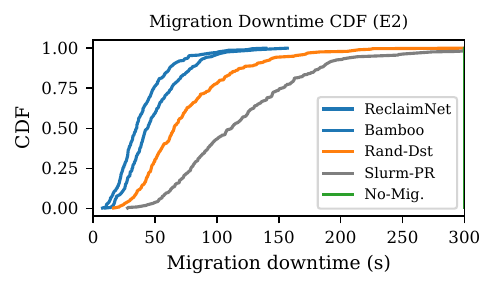}
  \caption{Per-migration downtime CDF: \sys cuts median by 38\% and p99 by 52\% versus Random-Dst.}
  \label{fig:p2_cdf}
\end{figure}

\section{P3: Deadline-Aware Migration Traffic Scheduling}
\label{sec:p3}

\subsection{Traffic Classification}

P3 classifies all traffic into four priority classes
using DSCP markings applied by TC BPF programs at the
egress of each provider node:

\begin{itemize}
  \item \textbf{$P_{\text{hi}}$ (Emergency migration):}
    Emergency reclaim with $\tau=0$ cannot assume the source
    remains available long enough to send new state. The job
    resumes from the freshest completed checkpoint; only
    small metadata/notification traffic is marked DSCP EF (46).
    If a provider grants a short grace period, P3 treats it as
    an urgent planned migration with a small $\tau$.
  \item \textbf{$P_{\text{med}}$ (Planned migration):}
    Provider departure with notice period $\tau>0$;
    subject to deadline-aware admission and allocation, marked
    DSCP AF41 (34).
  \item \textbf{$P_{\text{lo}}$ (Pre-sync):}
    Incremental background pre-synchronization from
    Algorithm~\ref{alg:p1}; marked DSCP CS1 (8).
  \item \textbf{$P_{\text{bg}}$ (Research traffic):}
    All other flows (SSH, Jupyter, data downloads);
    marked Best Effort (0).
\end{itemize}

\subsection{Admission and Bandwidth Allocation}

Total available bandwidth $B$ is split into a migration
share $B_{\text{mig}} = B - B_{\text{res}}$ and a
reserved share for research traffic:
$B_{\text{res}} = \max(\beta B,\ B_{\min})$,
with $\beta = 0.3$ calibrated from measurements and
$B_{\min}$ set by the campus administrator.

For each planned migration $k$, P3 converts the provider's
notice period into a minimum bandwidth demand:
\begin{equation}
  r_k^{\min}=\frac{C_k}{\tau_k-T_{r,k}} .
\end{equation}
The set is fully feasible when
$\sum_k r_k^{\min}\leq B_{\text{mig}}$. In that case, P3
admits all flows, assigns each $r_k^{\min}$, and distributes
remaining bandwidth with max-min water filling. When the
set is infeasible, P3 admits flows by earliest deadline first
subject to the same research-traffic reservation, and marks
the rest for degraded recovery from their freshest completed
checkpoints. This admission step is the protocol component;
the token bucket only enforces the selected rates.

Rates are written to a per-flow BPF hash map consumed by
the TC BPF token-bucket program.

Staggered scheduling offsets the start of planned flows
by $\delta = T_{\text{notice}}/K'$ each to eliminate
synchronized TCP slow-start bursts.

\subsection{eBPF Implementation}

A TC BPF program attached to each node's egress
interface implements per-flow token-bucket rate limiting
using \texttt{BPF\_MAP\_TYPE\_HASH}. Each bucket entry
stores the allocated rate (bytes/s), token count, and
last-refill timestamp. On each packet, tokens are
replenished proportionally to elapsed time; packets
are forwarded if the token count exceeds packet length,
otherwise dropped (TCP's congestion control handles
retransmission, achieving smooth rate enforcement
without kernel scheduler changes).

\subsection{Theorems 3 and 4: Formal Guarantees}

\begin{theorem}[Migration Completion Guarantee]
\label{thm:p3completion}
For an admitted planned migration $k$ with notice
$\tau_k>T_{r,k}$, if P3 assigns rate
$r_k \geq C_k / (\tau_k - T_{r,k})$ and the controlled
edge path sustains that rate, then the checkpoint transfer
and restart complete before provider departure.
\end{theorem}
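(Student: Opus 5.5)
The argument is a direct timing computation; the real work is making precise what ``the controlled edge path sustains that rate'' means. I would fix the notice instant as time $0$, so the provider departs at time $\tau_k$. I would read the sustained-rate hypothesis as a \emph{cumulative} service guarantee: if $A_k(t)$ is the number of checkpoint bytes of flow $k$ delivered to the destination by time $t$, then $A_k(t)\geq r_k\,t$ for every $t$ in the transfer interval, as long as the backlog is nonempty. This formulation does not require the instantaneous rate to be constant, which matters because the TC BPF token bucket drops packets and TCP retransmits. It only requires the long-run delivered volume to keep pace with the assigned rate. I would state this interpretation explicitly as the modeling assumption behind ``controlled edge.''

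\textbf{Step 1 (transfer completion time).} Let $T_{\text{xfer}}$ be the time at which the last byte of the payload $C_k$ arrives. Since $A_k(t)\geq r_k t$ while the backlog is nonempty, every $t$ with $r_k t\geq C_k$ must have $A_k(t)\geq C_k$. Hence $T_{\text{xfer}}\leq C_k/r_k$.

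\textbf{Step 2 (apply the admitted rate).} P3 guarantees $r_k\geq C_k/(\tau_k-T_{r,k})$, and $\tau_k>T_{r,k}$ makes the denominator positive. Therefore $C_k/r_k\leq\tau_k-T_{r,k}$.

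\textbf{Step 3 (add restart).} Restart begins once the transfer ends and takes at most $T_{r,k}$. So restart completes by $T_{\text{xfer}}+T_{r,k}\leq\tau_k$, which is no later than provider departure.

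I would also note that admission already ensures $\sum_k r_k^{\min}\leq B_{\text{mig}}$ on the feasible branch. On the infeasible branch, the same holds for the EDF-admitted subset. This is what makes the sustained-rate hypothesis consistent with the research reservation $B_{\text{res}}$, so the hypothesis is not vacuous for admitted flows.

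\textbf{Main obstacle.} The hard part is justifying Step~1's cumulative-service bound from a token-bucket enforcer, not the arithmetic. The bucket caps the rate from above, but the theorem needs a lower bound on delivered throughput. My approach is to make this the hypothesis itself, that the path sustains $r_k$. Token-bucket startup and refill granularity add at most a bucket-depth burst delay $\sigma/r_k$. I would remark that this delay can be folded into $T_{r,k}$, or handled by provisioning against $\hat B^{-}$, rather than proving TCP throughput bounds, which lie outside the theorem's controlled-edge scope.
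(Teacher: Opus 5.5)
Your proposal is correct and matches the paper's proof: at the assigned rate the transfer takes at most $C_k/r_k \leq \tau_k - T_{r,k}$, and adding the restart time $T_{r,k}$ gives completion by $\tau_k$. Your cumulative-service reading $A_k(t)\geq r_k t$ of ``sustains that rate,'' with the flow starting at the notice instant, just makes explicit the hypothesis the paper uses implicitly; your token-bucket and admission remarks are commentary and are not needed for the argument.
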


\begin{proof}
See Appendix~\ref{app:proofs}.
\end{proof}

\begin{theorem}[Research Traffic Isolation on Controlled Access Bottleneck]
\label{thm:p3isolation}
Consider a single access-layer bottleneck of capacity $B$
whose upstream traffic originates entirely from provider
nodes running the \sys agent. Assume all migration flows
crossing this bottleneck are classified by P3 and enforced
at participating provider egress points. Then P3 bounds
aggregate migration traffic on the bottleneck to at most
$B-B_{\min}$, leaving at least $B_{\min}$ capacity for
non-migration traffic on that bottleneck.
\end{theorem}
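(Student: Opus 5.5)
The plan is to argue at the level of rates on the bottleneck. Let $\mathcal{M}$ be the set of migration flows crossing the access bottleneck. By the hypotheses of Theorem~\ref{thm:p3isolation}, every flow in $\mathcal{M}$ is classified by P3 and originates at a participating provider egress. There, the TC BPF token bucket caps the flow's long-run sending rate at its allocated rate $r_k$, with burst at most the bucket depth. The first step is therefore to show that the rates written to the BPF hash map satisfy $\sum_{k\in\mathcal{M}} r_k \le B_{\text{mig}}$. I would prove this by cases on the admission branch.

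In the feasible branch, $\sum_k r_k^{\min}\le B_{\text{mig}}$. Each flow receives $r_k^{\min}$, and only the residual $B_{\text{mig}}-\sum_k r_k^{\min}$ is handed out by max-min water filling, so the total equals at most $B_{\text{mig}}$. In the infeasible branch, flows are admitted in EDF order only while the cumulative demand stays within the same budget $B_{\text{mig}}$. Rejected flows are marked for degraded recovery and are assigned zero migration rate on this bottleneck. The pre-sync class $P_{\text{lo}}$ and the P1 checkpoint demands are drawn from the same $B_{\text{mig}}$ pool, since P1's $B_{\mathrm{ckpt}}$ is defined after P3's reservation. The $P_{\text{hi}}$ metadata traffic I would either count inside $B_{\text{mig}}$ or treat as negligible control traffic; I will state this explicitly as part of ``classified by P3''.

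The second step turns the rate bound into the claimed capacity bound. Since $B_{\text{res}}=\max(\beta B,B_{\min})\ge B_{\min}$, we get $B_{\text{mig}}=B-B_{\text{res}}\le B-B_{\min}$. All upstream traffic comes from agent-running nodes, and no migration flow bypasses enforcement. Hence the aggregate migration arrival rate at the bottleneck is at most $\sum_k r_k \le B-B_{\min}$, and the remaining capacity of at least $B_{\min}$ is available to non-migration traffic. Token-bucket dropping only lowers a flow's delivered rate below its allocation, never raises it, so TCP retransmissions cannot push a flow past $r_k$.

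The main obstacle is making ``at most $B-B_{\min}$'' precise given token-bucket bursts and the fact that rates are enforced per node rather than at the bottleneck. I would state the bound over any window of length $T$ as
\[
\textstyle\sum_k (r_k T+\sigma_k)\le (B-B_{\min})T+\sum_k\sigma_k ,
\]
where $\sigma_k$ is the bucket depth. This gives the long-run (fluid) bound exactly, and the instantaneous bound up to a bounded burst term. If the paper intends the fluid reading, I would note that the bucket depth is set to one packet's worth of tokens, so the excess vanishes in the rate limit.
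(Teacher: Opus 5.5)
Your proposal is correct and follows the paper's own argument: $B_{\text{res}}=\max(\beta B,B_{\min})\geq B_{\min}$ gives $B_{\text{mig}}=B-B_{\text{res}}\leq B-B_{\min}$, P3's admission keeps the assigned migration rates within $B_{\text{mig}}$, and the edge token buckets enforce those rates; your branch-by-branch accounting and explicit burst term are refinements the paper omits. Two small cautions: the paper never specifies a bucket depth, so do not assert it is one packet (any finite depth already gives your fluid bound); and the joint-pool claim for pre-sync/checkpoint traffic is better grounded in P1 exporting its demand tuples to P3's admission step than in $B_{\mathrm{ckpt}}$ being defined after the reservation, since that alone yields only $B_{\mathrm{ckpt}}\leq B_{\text{mig}}$, not a bound on the combined load.
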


\begin{proof}
See Appendix~\ref{app:proofs}.
\end{proof}

\noindent
\textbf{Scope of the guarantee.} Theorem~\ref{thm:p3isolation}
applies per access bottleneck and only over flows whose source
egresses are controlled by a \sys agent. It does \emph{not}
extend to distribution- or core-layer links shared with
non-participating tenants, and partial deployments (a subset
of providers without P3) weaken the guarantee proportionally
to the uncontrolled migration share. In our 54-node deployment
all providers run the agent; we discuss the partial-deployment
regime in \S\ref{sec:discussion}.

\subsection{Kill-Switch Network Enforcement}

When a provider activates the kill-switch, the provider
agent sets a flag in a BPF array map. A TC BPF ingress
program checks this flag on every inbound packet:
new SYN segments are dropped immediately, isolating
the node from new work; in-flight migration transfers
(already past SYN) are allowed to complete only when the
provider selected scheduled reclaim or a grace period,
preserving checkpoint integrity without blocking the
provider's local reclaim decision. End-to-end kill-switch
latency is
under 1\,ms---two to three orders of magnitude lower
than application-layer enforcement paths ($\sim$100\,ms)
and iptables ($\sim$10\,ms).

\begin{figure}[t]
  \centering
  \includegraphics[width=\columnwidth]{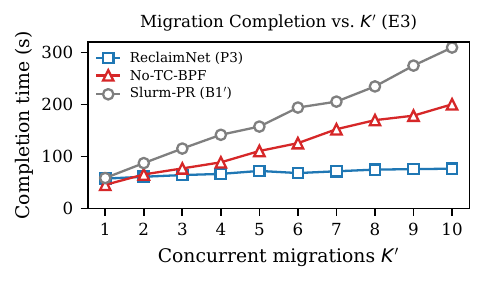}
  \caption{Migration completion time vs.\ concurrent migrations $K'$: P3 keeps admitted flows within their notice windows while baselines degrade linearly.}
  \label{fig:p3_completion}
\end{figure}

\begin{figure}[t]
  \centering
  \includegraphics[width=\columnwidth]{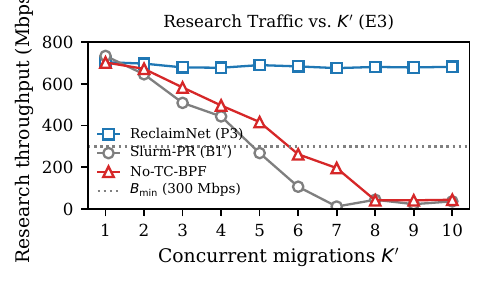}
  \caption{Research traffic throughput vs.\ $K'$: P3 sustains the $B_{\min}$ reserve while No-TC-BPF collapses to near zero.}
  \label{fig:p3_traffic}
\end{figure}

\section{Implementation}
\label{sec:impl}

\sys is implemented in $\sim$5{,}400 LoC across three layers:
\textbf{node agents} ($\sim$1{,}500 LoC C++17) drive
container lifecycle, checkpoint triggers, P2 selection, and
BPF-map updates; \textbf{eBPF programs} ($\sim$800 LoC C,
Clang 17 + libbpf) realise the P3 TC BPF token bucket and
the kill-switch; and the \textbf{coordinator}
($\sim$2{,}000 LoC C++17) hosts P3 admission/allocation,
the P2 topology database, and P1's departure-rate
estimator. mTLS-1.3 with short-lived coordinator-issued
JWTs authenticates all control traffic; A100/A6000 nodes
use MIG hardware partitioning while RTX 3090/4090 nodes
zero VRAM pages on container teardown. A full software-stack
and security-mechanism description appears in
Appendix~\ref{app:impl}.

\section{Evaluation}
\label{sec:eval}

\begin{figure*}[!t]
  \centering
  \begin{subfigure}[t]{\textwidth}
    \centering
    \includegraphics[width=\textwidth]{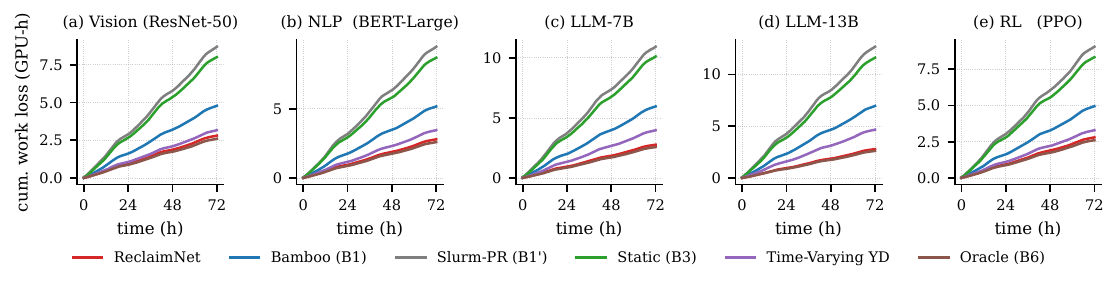}
    \caption{Cumulative work loss across five representative
      job classes (72-h slice of the two-month deployment):
      (a)~Vision (ResNet-50, $\sim$120\,MB dense state),
      (b)~NLP (BERT-Large, $\sim$340\,MB),
      (c)~LLM-7B with LoRA fine-tuning
      ($\sim$720\,MB adapter+optimizer),
      (d)~LLM-13B with LoRA ($\sim$1.2\,GB),
      (e)~RL (PPO, $\sim$210\,MB).
      Full-state LLM training (multi-GB to 48\,GB payloads, cf.\
      Fig.~\ref{fig:ckpt_size_cdf}) is shown in the
      Appendix~\ref{app:eval-extra} stress-test panel.
      \sys tracks the Oracle (B6) lower bound within 8--12\%
      on every job class.}
    \label{fig:workloss_jobs}
  \end{subfigure}

  \vspace{4pt}
  \begin{subfigure}[t]{\textwidth}
    \centering
    \includegraphics[width=\textwidth]{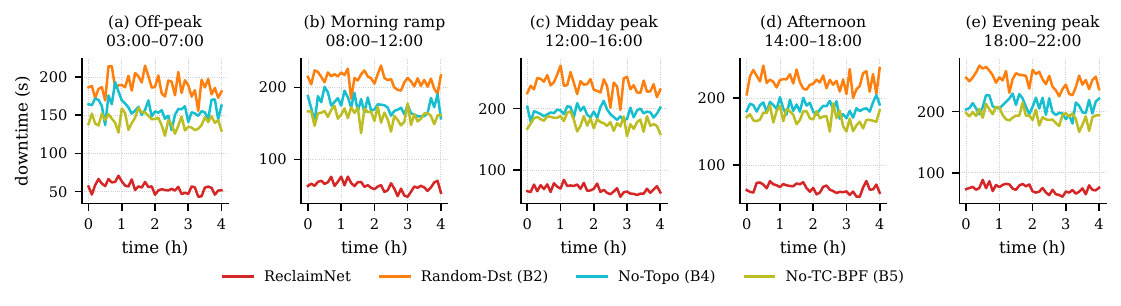}
    \caption{Per-migration downtime across five
      reclaim-intensity regimes drawn from the two-month
      trace: (a)~Off-peak (03:00--07:00),
      (b)~Morning ramp (08:00--12:00),
      (c)~Midday peak (12:00--16:00),
      (d)~Afternoon (14:00--18:00),
      (e)~Evening peak (18:00--22:00).
      \sys keeps median downtime within $\pm$10\% of
      off-peak even at evening peak.}
    \label{fig:downtime_regimes}
  \end{subfigure}

  \vspace{4pt}
  \begin{subfigure}[t]{\textwidth}
    \centering
    \includegraphics[width=\textwidth]{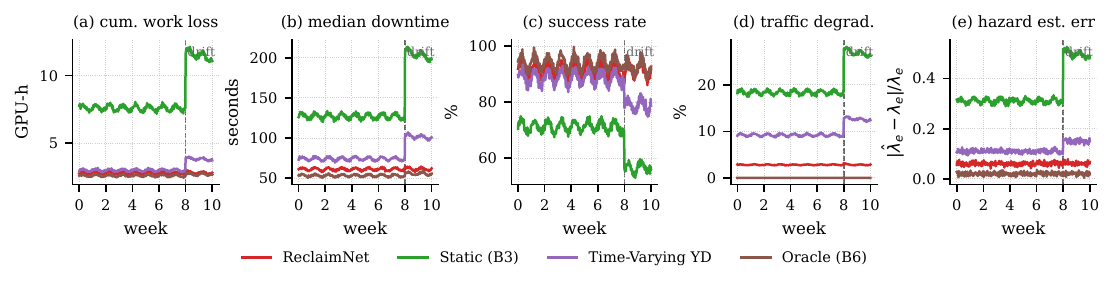}
    \caption{Multi-metric robustness across the two-month
      main deployment plus a held-out two-week drift period
      (dashed line marks drift onset):
      (a)~cumulative work loss, (b)~median downtime,
      (c)~success rate, (d)~research-traffic degradation,
      (e)~hazard-rate estimation error
      $|\hat\lambda_e-\lambda_e|/\lambda_e$.
      \sys recovers within $\sim$18\,min; baselines stay
      elevated.}
    \label{fig:drift_multimetric}
  \end{subfigure}

  \caption{End-to-end evaluation panels organised as
    multi-scenario sweeps. \emph{Top:}
    work-loss sensitivity to job class.
    \emph{Middle:} downtime sensitivity to reclaim intensity.
    \emph{Bottom:} multi-metric robustness to distribution
    drift. Shared method palette across all three rows.}
  \label{fig:eval_grid}
\end{figure*}

\subsection{Testbed and Baselines}

We evaluate on the 54-node testbed of
\S\ref{sec:measurement}, comparing \sys with seven online
baselines and an offline lower bound: \emph{Bamboo}~\cite{thorpe2023bamboo}
(B1), \emph{Slurm-PR} with DMTCP 10-min checkpoints
(B1$'$), \emph{No-Migration} (B2), trace-wide
\emph{Static-Ckpt} (B3, fixed periodic interval),
\emph{Random-Dst} (B4),
\emph{No-TC-BPF} (B5), state-independent
\emph{Time-Varying Fixed-Interval} (B7), and \emph{Oracle} (B6).
B4 and B5 remove topology-aware
destination selection and TC/eBPF isolation, respectively;
B7 keeps P1's estimators, the cube-root allocator
(Thm.~\ref{thm:p1coupled}), and P2/P3, but fixes one
$\Delta t$ per epoch; B6 solves an offline CPLEX ILP over
24-h windows. Full parameters and the ILP formulation appear
in Appendix~\ref{app:protocol-details}.

The two-month run comprises 312 GPU-training jobs
(BERT-Large, ResNet-152, GPT-2, Stable Diffusion), totaling
4{,}890 GPU-hours, and 847 departure events
(421 natural reclaims and 426 empirical injections shared by
all baselines). Natural-only results closely match aggregate
results---2.6 vs.\ 2.8\,GPU-h work loss, 58 vs.\ 61\,s median
downtime, 2.7\% vs.\ 2.8\% traffic degradation, and a 6.4\%
vs.\ 6.7\% B7 gap---with unchanged baseline ordering.

\begin{table}[t]
\centering
\caption{End-to-end evaluation results (two-month experiment,
847 departure events). $\downarrow$~lower is better;
$\uparrow$~higher is better.}
\label{tab:main_results}
\scriptsize
\setlength{\tabcolsep}{4pt}
\resizebox{\columnwidth}{!}{
\begin{tabular}{lccccc}
\toprule
System & Work & Down- & Mig. & Traffic & GPU \\
 & Loss\,$\downarrow$ & time (s)\,$\downarrow$ & Succ.\,\%\,$\uparrow$ & Degr.\,\%\,$\downarrow$ & Util.\,\%\,$\uparrow$ \\
 & (GPU-h) & (median) & & & \\
\midrule
Slurm-PR (B1$'$) & 8.2  & 195 & 64.5 & 26.4 & 38.1 \\
No-Mig. (B2)  & 18.4 & --- & 0    & 0.0  & 31.2 \\
Static (B3)   & 7.6  & 127 & 71.3 & 18.4 & 46.8 \\
Bamboo (B1)   & 4.5  & 75  & 85.4 & 8.2  & 58.7 \\
Rand-Dst (B4) & 3.8  & 98  & 84.2 & 7.3  & 59.1 \\
No-BPF (B5)   & 3.1  & 71  & 88.9 & 31.0 & 65.3 \\
TV-FI (B7)    & 3.0  & 64  & 90.4 & 2.9  & 67.5 \\
\midrule
\sys & 2.8 & 61 & 91.3 & 2.8 & 68.7 \\
Oracle (B6)   & 2.6  & 53  & 94.1 & 0.0  & 71.4 \\
\bottomrule
\end{tabular}
}
\end{table}

\noindent\sys reaches within 7.7\% of Oracle on work loss
while providing the only combination of $>91\%$ migration
success and $<3\%$ research traffic degradation; the
per-class, per-regime, and drift breakdowns in
Fig.~\ref{fig:eval_grid} (subfigures
\ref{fig:workloss_jobs}, \ref{fig:downtime_regimes},
\ref{fig:drift_multimetric}) confirm these aggregates are
not driven by a single workload or operating regime.

\subsection{E1: End-to-End Work Loss (Gap 1)}

\sys reduces average work loss by 66\% over
Slurm-PR (B1$'$) and 38\% over Bamboo (B1), while
remaining within 8\% of the Oracle lower bound
(B6). Its 6.7\% gain over Time-Varying Fixed-Interval (B7)
isolates the benefit of state-dependent interval selection
and is consistent with the $\sim$5.2\% adaptivity
gap predicted from $\Delta_{\mathrm{CS}}$
(Thm.~\ref{thm:p1corr}). The residual gap is explained by
diurnal non-stationarity: P1 re-estimates online, whereas B7
commits to an epoch-wide interval and consequently shows
higher downtime and lower migration success during peak
hours. Ablation (Table~\ref{tab:ablation}) attributes
64\% of the work-loss reduction to P1; P2+P3 recover
the remainder by preventing failed migrations. P1 shortens
checkpoint intervals to 4--6\,min under high reclaim risk and
relaxes them during low-risk periods to reduce network load.
All differences are significant on the 847-event population
(paired Wilcoxon, $p<0.01$); the two-month utilization trace
in Fig.~\ref{fig:e2e_utilization} visualises the headline
deployment gain.

\begin{figure}[t]
  \centering
  \includegraphics[width=\columnwidth]{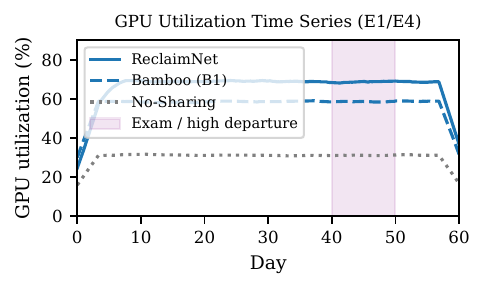}
  \caption{Two-month GPU utilization: \sys sustains $\sim$68.7\%, +10 pp over Bamboo and +37.5 pp over no-sharing.}
  \label{fig:e2e_utilization}
\end{figure}

\subsection{E2--E4: Downtime, Traffic Impact, Kill-Switch}

E2 Downtime (Gap 2). P2 reduces median migration
downtime by 38\% and p99 downtime by 52\%
relative to Random-Dst (B4) (Fig.~\ref{fig:p2_cdf}). The gain comes from preferring
same-building targets when feasible (71\% of migrations;
821\,Mbps intra-building vs.\ 243\,Mbps cross-building) and
falling back only when locality violates volatility or notice
constraints.
E3 Traffic impact (Gap 3). Under peak concurrent
migrations (up to seven events) with a co-running reference
workload, P3 holds research-traffic degradation to
2.8\%, compared with 31\% for No-TC-BPF (B5)
(Figs.~\ref{fig:p3_completion} and \ref{fig:p3_traffic}). Its
admission and rate enforcement reduce the unconstrained
burst from $\sim$6.8\,Gbps to $\leq$4.2\,Gbps, preserving
$\geq$30\% headroom.
E4 Kill-switch latency. TC BPF enforcement cuts
reclaim blocking latency to sub-millisecond scale (median
0.6\,ms, p99 0.9\,ms), over $100\times$
faster than application-layer kill paths; fewer than 500\,B
of new work can reach an evicting node after the signal.

\subsection{E5--E8: Scalability, Sensitivity, Ablation, Drift}

E5 Scalability. Mininet emulation to 500 nodes keeps
coordinator runtime below 18\,ms, far under the 10-second
notice budget (Appendix~\ref{app:eval-extra},
Fig.~\ref{fig:scalability}).
E6 Sensitivity. Sweeping
$\beta_{\text{p1}},\alpha,\beta_{\text{p3}}$ on the two-month
trace changes work loss/downtime by at most $\pm$6.4\% and
traffic degradation by $\pm$1.1\,pp; the defaults remain near
the empirical minima (Fig.~\ref{fig:sensitivity_grid}).
E7 Ablation. The per-protocol study
(Table~\ref{tab:ablation}) confirms orthogonal effects: P1
dominates work-loss reduction, P2 halves median downtime
(142\,s to 71\,s), and P3 reduces traffic degradation from
31.0\% to 2.8\%.
E8 Drift. Under a held-out two-week
deadline-crunch period with a $1.7\times$ hazard shift, P1
reconverges within 18\,min; work loss stays within
7\% of the in-distribution mean (95\% bootstrap CI
$[2.7,3.3]$ GPU-h, $p=0.21$).

\section{Discussion and Conclusion}
\label{sec:discussion}

\textbf{Limitations and scope.}
\sys relies on application-level checkpointing (PyTorch,
JAX hooks) and does not cover compiled or legacy GPU
workloads; CRIU-based CPU-state checkpointing with
GPU hooks is future work. The VVN model assumes trusted
providers; untrusted wide-area settings need additional
authentication and sandboxing.

\textbf{Partial deployment.}
Theorem~\ref{thm:p3isolation} holds only over the
\sys-controlled migration share: with participation
$\phi\in(0,1]$ and uncontrolled burst $\bar u$, the
guarantee degrades to $B - B_{\min} - (1-\phi)\bar u$.
Restoring it requires rate-capping uncontrolled tenants or
provisioning $B \geq B_{\min} + B_{\text{mig}} + (1-\phi)\bar u$.
Coordinator failure is handled by leader election with
at-most-once checkpoint-completion semantics.

\textbf{Conclusion.}
\sys couples reclaim-aware checkpointing (P1),
volatility/deadline-aware destination selection (P2), and
notice-driven migration admission with TC BPF enforcement
(P3). A 54-node campus testbed shows that provider autonomy
and system reliability coexist with lightweight coordination
and no switch reconfiguration. Future work targets joint
P1/P2 optimization, departure-forecast preemptive migration,
and RDMA checkpoint transfer.

\balance
\bibliographystyle{IEEEtran}
\bibliography{references}

\begin{thebibliography}{10}
\providecommand{\url}[1]{#1}
\csname url@samestyle\endcsname
\providecommand{\newblock}{\relax}
\providecommand{\bibinfo}[2]{#2}
\providecommand{\BIBentrySTDinterwordspacing}{\spaceskip=0pt\relax}
\providecommand{\BIBentryALTinterwordstretchfactor}{4}
\providecommand{\BIBentryALTinterwordspacing}{\spaceskip=\fontdimen2\font plus
\BIBentryALTinterwordstretchfactor\fontdimen3\font minus
  \fontdimen4\font\relax}
\providecommand{\BIBforeignlanguage}[2]{{%
\expandafter\ifx\csname l@#1\endcsname\relax
\typeout{** WARNING: IEEEtran.bst: No hyphenation pattern has been}%
\typeout{** loaded for the language `#1'. Using the pattern for}%
\typeout{** the default language instead.}%
\else
\language=\csname l@#1\endcsname
\fi
#2}}
\providecommand{\BIBdecl}{\relax}
\BIBdecl

\bibitem{kubernetes}
{Cloud Native Computing Foundation}, ``Kubernetes,''
  \url{https://kubernetes.io}, 2014, accessed: 2026-05-23.

\bibitem{yoo2003slurm}
A.~B. Yoo, M.~A. Jette, and M.~Grondona, ``{SLURM}: Simple linux utility for
  resource management,'' in \emph{Workshop on Job Scheduling Strategies for
  Parallel Processing}.\hskip 1em plus 0.5em minus 0.4em\relax Springer, 2003,
  pp. 44--60.

\bibitem{anderson2002seti}
D.~P. Anderson, J.~Cobb, E.~Korpela, M.~Lebofsky, and D.~Werthimer,
  ``{SETI@home}: An experiment in public-resource computing,''
  \emph{Communications of the ACM}, vol.~45, no.~11, pp. 56--61, 2002.

\bibitem{pande2003folding}
V.~S. Pande, I.~Baker, J.~Chapman \emph{et~al.}, ``Atomistic protein folding
  simulations on the submillisecond time scale using worldwide distributed
  computing,'' \emph{Biopolymers}, vol.~68, no.~1, pp. 91--109, 2003.

\bibitem{li2025gpunion}
Y.~Li, Y.~Zhang, H.~Liao, D.~Guo, and G.~Tang, ``{GPUnion}: Autonomous {GPU}
  sharing on campus,'' in \emph{Proceedings of the 24th ACM Workshop on Hot
  Topics in Networks (HotNets)}, 2025.

\bibitem{young1974}
J.~W. Young, ``A first order approximation to the optimum checkpoint
  interval,'' \emph{Communications of the ACM}, vol.~17, no.~9, pp. 530--531,
  1974.

\bibitem{daly2006}
J.~T. Daly, ``A higher order estimate of the optimum checkpoint interval for
  restart dumps,'' \emph{Future Generation Computer Systems}, vol.~22, no.~3,
  pp. 303--312, 2006.

\bibitem{thorpe2023bamboo}
J.~Thorpe, P.~Zhao, J.~Eyolfson, Y.~Qiao, Z.~Jia, M.~Zhang, R.~Netravali, and
  G.~H. Xu, ``Bamboo: Making preemptible instances resilient for affordable
  training of large {DNNs},'' in \emph{Proceedings of the 20th USENIX Symposium
  on Networked Systems Design and Implementation (NSDI '23)}.\hskip 1em plus
  0.5em minus 0.4em\relax USENIX Association, 2023, pp. 497--513.

\bibitem{xiao2018gandiva}
W.~Xiao, R.~Bhardwaj, R.~Ramjee, M.~Sivathanu, N.~Kwatra, Z.~Han, P.~Patel,
  X.~Peng, H.~Zhao, Q.~Zhang, F.~Yang, and L.~Zhou, ``Gandiva: Introspective
  cluster scheduling for deep learning,'' in \emph{Proceedings of the 13th
  USENIX Symposium on Operating Systems Design and Implementation (OSDI
  '18)}.\hskip 1em plus 0.5em minus 0.4em\relax USENIX Association, 2018, pp.
  595--610.

\bibitem{litzkow1988condor}
M.~J. Litzkow, M.~Livny, and M.~W. Mutka, ``Condor---a hunter of idle
  workstations,'' in \emph{Proceedings of the 8th International Conference on
  Distributed Computing Systems (ICDCS)}.\hskip 1em plus 0.5em minus
  0.4em\relax IEEE, 1988, pp. 104--111.

\bibitem{thain2005condor}
D.~Thain, T.~Tannenbaum, and M.~Livny, ``Distributed computing in practice: The
  {Condor} experience,'' \emph{Concurrency and Computation: Practice and
  Experience}, vol.~17, no. 2--4, pp. 323--356, 2005.

\bibitem{anderson2004boinc}
D.~P. Anderson, ``{BOINC}: A system for public-resource computing and
  storage,'' in \emph{Proceedings of the 5th IEEE/ACM International Workshop on
  Grid Computing (GRID '04)}.\hskip 1em plus 0.5em minus 0.4em\relax IEEE,
  2004, pp. 4--10.

\bibitem{bosilca2002mpichv}
G.~Bosilca, A.~Bouteiller, F.~Cappello, S.~Djilali, G.~Fedak, C.~Germain,
  T.~H{\'e}rault, P.~Lemarinier, O.~Lodygensky, F.~Magniette, V.~Neri, and
  A.~Selikhov, ``{MPICH-V}: Toward a scalable fault tolerant {MPI} for volatile
  nodes,'' in \emph{Proceedings of the ACM/IEEE Conference on Supercomputing
  (SC '02)}, 2002, pp. 29:1--29:18.

\bibitem{moody2010scr}
A.~Moody, G.~Bronevetsky, K.~Mohror, and B.~R. de~Supinski, ``Design, modeling,
  and evaluation of a scalable multi-level checkpointing system,'' in
  \emph{Proceedings of the ACM/IEEE International Conference for High
  Performance Computing, Networking, Storage and Analysis (SC '10)}.\hskip 1em
  plus 0.5em minus 0.4em\relax IEEE, 2010, pp. 1--11.

\bibitem{bautista2011fti}
L.~Bautista-Gomez, S.~Tsuboi, D.~Komatitsch, F.~Cappello, N.~Maruyama, and
  S.~Matsuoka, ``{FTI}: High performance fault tolerance interface for hybrid
  systems,'' in \emph{Proceedings of the ACM/IEEE International Conference for
  High Performance Computing, Networking, Storage and Analysis (SC '11)}, 2011,
  pp. 32:1--32:32.

\bibitem{plank1995libckpt}
J.~S. Plank, M.~Beck, G.~Kingsley, and K.~Li, ``Libckpt: Transparent
  checkpointing under {Unix},'' in \emph{Proceedings of the USENIX 1995
  Technical Conference}.\hskip 1em plus 0.5em minus 0.4em\relax USENIX
  Association, 1995, pp. 213--223.

\bibitem{hargrove2006blcr}
P.~H. Hargrove and J.~C. Duell, ``Berkeley lab checkpoint/restart ({BLCR}) for
  {Linux} clusters,'' \emph{Journal of Physics: Conference Series}, vol.~46,
  pp. 494--499, 2006.

\bibitem{ansel2009dmtcp}
J.~Ansel, K.~Arya, and G.~Cooperman, ``{DMTCP}: Transparent checkpointing for
  cluster computations and the desktop,'' in \emph{Proceedings of the 23rd IEEE
  International Parallel and Distributed Processing Symposium (IPDPS)}.\hskip
  1em plus 0.5em minus 0.4em\relax IEEE, 2009, pp. 1--12.

\bibitem{schroeder2007disk}
B.~Schroeder and G.~A. Gibson, ``Disk failures in the real world: What does an
  {MTTF} of 1,000,000 hours mean to you?'' in \emph{Proceedings of the 5th
  USENIX Conference on File and Storage Technologies (FAST '07)}.\hskip 1em
  plus 0.5em minus 0.4em\relax USENIX Association, 2007, pp. 1--16.

\bibitem{pinheiro2007disk}
E.~Pinheiro, W.-D. Weber, and L.~A. Barroso, ``Failure trends in a large disk
  drive population,'' in \emph{Proceedings of the 5th USENIX Conference on File
  and Storage Technologies (FAST '07)}.\hskip 1em plus 0.5em minus 0.4em\relax
  USENIX Association, 2007, pp. 17--28.

\bibitem{schroeder2009dram}
B.~Schroeder, E.~Pinheiro, and W.-D. Weber, ``{DRAM} errors in the wild: A
  large-scale field study,'' in \emph{Proceedings of the 11th International
  Joint Conference on Measurement and Modeling of Computer Systems (SIGMETRICS
  '09)}.\hskip 1em plus 0.5em minus 0.4em\relax ACM, 2009, pp. 193--204.

\bibitem{clark2005live}
C.~Clark, K.~Fraser, S.~Hand, J.~G. Hansen, E.~Jul, C.~Limpach, I.~Pratt, and
  A.~Warfield, ``Live migration of virtual machines,'' in \emph{Proceedings of
  the 2nd USENIX Symposium on Networked Systems Design and Implementation (NSDI
  '05)}.\hskip 1em plus 0.5em minus 0.4em\relax USENIX Association, 2005, pp.
  273--286.

\bibitem{hines2009postcopy}
M.~R. Hines, U.~Deshpande, and K.~Gopalan, ``Post-copy live migration of
  virtual machines,'' \emph{ACM SIGOPS Operating Systems Review}, vol.~43,
  no.~3, pp. 14--26, 2009.

\bibitem{wood2011cloudnet}
T.~Wood, K.~K. Ramakrishnan, P.~Shenoy, and J.~van~der Merwe, ``{CloudNet}:
  Dynamic pooling of cloud resources by live {WAN} migration of virtual
  machines,'' in \emph{Proceedings of the 7th ACM SIGPLAN/SIGOPS International
  Conference on Virtual Execution Environments (VEE '11)}.\hskip 1em plus 0.5em
  minus 0.4em\relax ACM, 2011, pp. 121--132.

\bibitem{gu2019tiresias}
J.~Gu, M.~Chowdhury, K.~G. Shin, Y.~Zhu, M.~Jeon, J.~Qian, H.~Liu, and C.~Guo,
  ``Tiresias: A {GPU} cluster manager for distributed deep learning,'' in
  \emph{Proceedings of the 16th USENIX Symposium on Networked Systems Design
  and Implementation (NSDI '19)}.\hskip 1em plus 0.5em minus 0.4em\relax USENIX
  Association, 2019, pp. 485--500.

\bibitem{mahajan2020themis}
K.~Mahajan, A.~Balasubramanian, A.~Singhvi, S.~Venkataraman, A.~Akella,
  A.~Phanishayee, and S.~Chawla, ``Themis: Fair and efficient {GPU} cluster
  scheduling,'' in \emph{Proceedings of the 17th USENIX Symposium on Networked
  Systems Design and Implementation (NSDI '20)}.\hskip 1em plus 0.5em minus
  0.4em\relax USENIX Association, 2020, pp. 289--304.

\bibitem{zhao2020hived}
H.~Zhao, Z.~Han, Z.~Yang, Q.~Zhang, F.~Yang, L.~Zhou, M.~Yang, F.~C.~M. Lau,
  Y.~Wang, Y.~Xiong, and B.~Wang, ``{HiveD}: Sharing a {GPU} cluster for deep
  learning with guarantees,'' in \emph{Proceedings of the 14th USENIX Symposium
  on Operating Systems Design and Implementation (OSDI '20)}.\hskip 1em plus
  0.5em minus 0.4em\relax USENIX Association, 2020, pp. 515--532.

\bibitem{peng2018optimus}
Y.~Peng, Y.~Bao, Y.~Chen, C.~Wu, and C.~Guo, ``Optimus: An efficient dynamic
  resource scheduler for deep learning clusters,'' in \emph{Proceedings of the
  13th European Conference on Computer Systems (EuroSys '18)}.\hskip 1em plus
  0.5em minus 0.4em\relax ACM, 2018, pp. 3:1--3:14.

\bibitem{qiao2021pollux}
A.~Qiao, S.~K. Choe, S.~J. Subramanya, W.~Neiswanger, Q.~Ho, H.~Zhang, G.~R.
  Ganger, and E.~P. Xing, ``Pollux: Co-adaptive cluster scheduling for
  goodput-optimized deep learning,'' in \emph{Proceedings of the 15th USENIX
  Symposium on Operating Systems Design and Implementation (OSDI '21)}.\hskip
  1em plus 0.5em minus 0.4em\relax USENIX Association, 2021, pp. 1--18.

\bibitem{xiao2020antman}
W.~Xiao, S.~Ren, Y.~Li, Y.~Zhang, P.~Hou, Z.~Li, Y.~Feng, W.~Lin, and Y.~Jia,
  ``{AntMan}: Dynamic scaling on {GPU} clusters for deep learning,'' in
  \emph{Proceedings of the 14th USENIX Symposium on Operating Systems Design
  and Implementation (OSDI '20)}.\hskip 1em plus 0.5em minus 0.4em\relax USENIX
  Association, 2020, pp. 533--548.

\bibitem{jeon2019analysis}
M.~Jeon, S.~Venkataraman, A.~Phanishayee, J.~Qian, W.~Xiao, and F.~Yang,
  ``Analysis of large-scale multi-tenant {GPU} clusters for {DNN} training
  workloads,'' in \emph{Proceedings of the 2019 USENIX Annual Technical
  Conference (ATC '19)}.\hskip 1em plus 0.5em minus 0.4em\relax USENIX
  Association, 2019, pp. 947--960.

\bibitem{narayanan2019pipedream}
D.~Narayanan, A.~Harlap, A.~Phanishayee, V.~Seshadri, N.~R. Devanur, G.~R.
  Ganger, P.~B. Gibbons, and M.~Zaharia, ``{PipeDream}: Generalized pipeline
  parallelism for {DNN} training,'' in \emph{Proceedings of the 27th ACM
  Symposium on Operating Systems Principles (SOSP '19)}.\hskip 1em plus 0.5em
  minus 0.4em\relax ACM, 2019, pp. 1--15.

\bibitem{huang2019gpipe}
Y.~Huang, Y.~Cheng, A.~Bapna, O.~Firat, D.~Chen, M.~Chen, H.~Lee, J.~Ngiam,
  Q.~V. Le, Y.~Wu, and Z.~Chen, ``{GPipe}: Efficient training of giant neural
  networks using pipeline parallelism,'' in \emph{Advances in Neural
  Information Processing Systems 32 (NeurIPS)}, 2019, pp. 103--112.

\bibitem{li2020pytorchddp}
S.~Li, Y.~Zhao, R.~Varma, O.~Salpekar, P.~Noordhuis, T.~Li, A.~Paszke,
  J.~Smith, B.~Vaughan, P.~Damania, and S.~Chintala, ``{PyTorch} distributed:
  Experiences on accelerating data parallel training,'' \emph{Proceedings of
  the VLDB Endowment}, vol.~13, no.~12, pp. 3005--3018, 2020.

\bibitem{wang2023topoopt}
W.~Wang, M.~Khazraee, Z.~Zhong, M.~Ghobadi, Z.~Jia, D.~Mudigere, Y.~Zhang, and
  A.~Kewitsch, ``{TopoOpt}: Co-optimizing network topology and parallelization
  strategy for distributed training jobs,'' in \emph{Proceedings of the 20th
  USENIX Symposium on Networked Systems Design and Implementation (NSDI
  '23)}.\hskip 1em plus 0.5em minus 0.4em\relax USENIX Association, 2023, pp.
  739--767.

\bibitem{sharma2015spotcheck}
P.~Sharma, S.~Lee, T.~Guo, D.~Irwin, and P.~Shenoy, ``{SpotCheck}: Designing a
  derivative {IaaS} cloud on the spot market,'' in \emph{Proceedings of the
  10th European Conference on Computer Systems (EuroSys '15)}.\hskip 1em plus
  0.5em minus 0.4em\relax ACM, 2015, pp. 16:1--16:15.

\bibitem{harlap2017proteus}
A.~Harlap, A.~Tumanov, A.~Chung, G.~R. Ganger, and P.~B. Gibbons, ``Proteus:
  Agile {ML} elasticity through tiered reliability in dynamic resource
  markets,'' in \emph{Proceedings of the 12th European Conference on Computer
  Systems (EuroSys '17)}.\hskip 1em plus 0.5em minus 0.4em\relax ACM, 2017, pp.
  589--604.

\bibitem{hindman2011mesos}
B.~Hindman, A.~Konwinski, M.~Zaharia, A.~Ghodsi, A.~D. Joseph, R.~H. Katz,
  S.~Shenker, and I.~Stoica, ``Mesos: A platform for fine-grained resource
  sharing in the data center,'' in \emph{Proceedings of the 8th USENIX
  Symposium on Networked Systems Design and Implementation (NSDI '11)}.\hskip
  1em plus 0.5em minus 0.4em\relax USENIX Association, 2011, pp. 295--308.

\bibitem{vavilapalli2013yarn}
V.~K. Vavilapalli, A.~C. Murthy, C.~Douglas, S.~Agarwal, M.~Konar, R.~Evans,
  T.~Graves, J.~Lowe, H.~Shah, S.~Seth, B.~Saha, C.~Curino, O.~O'Malley,
  S.~Radia, B.~Reed, and E.~Baldeschwieler, ``{Apache Hadoop YARN}: Yet another
  resource negotiator,'' in \emph{Proceedings of the 4th Annual Symposium on
  Cloud Computing (SoCC '13)}.\hskip 1em plus 0.5em minus 0.4em\relax ACM,
  2013, pp. 5:1--5:16.

\bibitem{verma2015borg}
A.~Verma, L.~Pedrosa, M.~Korupolu, D.~Oppenheimer, E.~Tune, and J.~Wilkes,
  ``Large-scale cluster management at {Google} with {Borg},'' in
  \emph{Proceedings of the 10th European Conference on Computer Systems
  (EuroSys '15)}.\hskip 1em plus 0.5em minus 0.4em\relax ACM, 2015, pp.
  18:1--18:17.

\bibitem{mccanne1993bpf}
S.~McCanne and V.~Jacobson, ``The {BSD} packet filter: A new architecture for
  user-level packet capture,'' in \emph{Proceedings of the USENIX Winter 1993
  Conference}.\hskip 1em plus 0.5em minus 0.4em\relax USENIX Association, 1993,
  pp. 259--269.

\bibitem{hoiland2018xdp}
T.~H{\o}iland-J{\o}rgensen, J.~D. Brouer, D.~Borkmann, J.~Fastabend,
  T.~Herbert, D.~Ahern, and D.~Miller, ``The {eXpress Data Path}: Fast
  programmable packet processing in the operating system kernel,'' in
  \emph{Proceedings of the 14th International Conference on Emerging Networking
  EXperiments and Technologies (CoNEXT '18)}.\hskip 1em plus 0.5em minus
  0.4em\relax ACM, 2018, pp. 54--66.

\bibitem{jouet2017bpfabric}
S.~Jouet and D.~P. Pezaros, ``{BPFabric}: Data plane programmability for
  software defined networks,'' in \emph{Proceedings of the ACM/IEEE Symposium
  on Architectures for Networking and Communications Systems (ANCS '17)}.\hskip
  1em plus 0.5em minus 0.4em\relax IEEE, 2017, pp. 38--48.

\bibitem{saeed2017carousel}
A.~Saeed, N.~Dukkipati, V.~Valancius, V.~T. Lam, C.~Contavalli, and A.~Vahdat,
  ``Carousel: Scalable traffic shaping at end hosts,'' in \emph{Proceedings of
  the ACM SIGCOMM 2017 Conference}.\hskip 1em plus 0.5em minus 0.4em\relax ACM,
  2017, pp. 404--417.

\bibitem{alizadeh2010dctcp}
M.~Alizadeh, A.~Greenberg, D.~A. Maltz, J.~Padhye, P.~Patel, B.~Prabhakar,
  S.~Sengupta, and M.~Sridharan, ``Data center {TCP} ({DCTCP}),'' in
  \emph{Proceedings of the ACM SIGCOMM 2010 Conference}.\hskip 1em plus 0.5em
  minus 0.4em\relax ACM, 2010, pp. 63--74.

\bibitem{alizadeh2013pfabric}
M.~Alizadeh, S.~Yang, M.~Sharif, S.~Katti, N.~McKeown, B.~Prabhakar, and
  S.~Shenker, ``{pFabric}: Minimal near-optimal datacenter transport,'' in
  \emph{Proceedings of the ACM SIGCOMM 2013 Conference}.\hskip 1em plus 0.5em
  minus 0.4em\relax ACM, 2013, pp. 435--446.

\bibitem{alizadeh2014conga}
M.~Alizadeh, T.~Edsall, S.~Dharmapurikar, R.~Vaidyanathan, K.~Chu,
  A.~Fingerhut, V.~T. Lam, F.~Matus, R.~Pan, N.~Yadav, and G.~Varghese,
  ``{CONGA}: Distributed congestion-aware load balancing for datacenters,'' in
  \emph{Proceedings of the ACM SIGCOMM 2014 Conference}.\hskip 1em plus 0.5em
  minus 0.4em\relax ACM, 2014, pp. 503--514.

\bibitem{bai2015pias}
W.~Bai, L.~Chen, K.~Chen, D.~Han, C.~Tian, and H.~Wang, ``Information-agnostic
  flow scheduling for commodity data centers,'' in \emph{Proceedings of the
  12th USENIX Symposium on Networked Systems Design and Implementation (NSDI
  '15)}.\hskip 1em plus 0.5em minus 0.4em\relax USENIX Association, 2015, pp.
  455--468.

\bibitem{chowdhury2014varys}
M.~Chowdhury, Y.~Zhong, and I.~Stoica, ``Efficient coflow scheduling with
  {Varys},'' in \emph{Proceedings of the ACM SIGCOMM 2014 Conference}.\hskip
  1em plus 0.5em minus 0.4em\relax ACM, 2014, pp. 443--454.

\end{thebibliography}

\appendices

\section{Proofs of Theorems and Lemmas}
\label{app:proofs}

\subsection{Proof of Lemma~\ref{lem:p1local} (Local Optimal Interval)}
Setting $\partial \mathcal{L}/\partial(\Delta t) = 0$:
$\tfrac{\lambda_e}{2} - \tfrac{C}{B_{\text{eff}}(\Delta t)^2}
= 0$, giving $\Delta t^* = \sqrt{2C / (\lambda_e
B_{\text{eff}})}$. The second derivative
$\tfrac{2C}{B_{\text{eff}}(\Delta t)^3} > 0$ confirms a
global minimum.

\subsection{Proof of Theorem~\ref{thm:p1coupled} (Network-Coupled Allocation)}
For fixed $b_i$, minimizing
\eqref{eq:p1multi-cost} over $\Delta_i$ gives
$\Delta_i^*(b_i)=\sqrt{2C_i/(\lambda_i b_i)}$ by
Lemma~\ref{lem:p1local}, and the resulting cost is
$\mathcal{L}_i^*(b_i)=\sqrt{2\lambda_iC_i/b_i}$.
The remaining problem is convex in $b_i>0$:
$\min_{\{b_i\}}\sum_i\sqrt{2\lambda_iC_i}\,b_i^{-1/2}$
s.t.\ $\sum_i b_i\leq B_{\mathrm{ckpt}}$.
The bandwidth constraint is tight at optimum. The KKT
stationarity condition is
$-\tfrac{1}{2}\sqrt{2\lambda_iC_i}\,b_i^{-3/2}+\mu=0$,
so $b_i\propto(\lambda_iC_i)^{1/3}$. Normalizing by
$\sum_i b_i=B_{\mathrm{ckpt}}$ gives \eqref{eq:p1cube};
substitution gives \eqref{eq:p1multi-delta}.

\subsection{Proof of Lemma~\ref{lem:p1box} (Box-Constrained Allocation)}
The objective is convex and separable, and the constraint
set is a polytope. KKT stationarity together with
complementary slackness gives: if a box constraint is
inactive at the optimum, the cube-root rule holds among the
remaining jobs sharing the residual budget; if a box
constraint is active, the corresponding $b_i=W_i$.
At every iteration, the active set strictly shrinks: a job
is only removed if its unconstrained allocation under the
current residual budget exceeds its cap, in which case it
is fixed at the cap. Since the residual budget is
monotonically non-increasing and the number of jobs is
finite, the iteration terminates in at most $K$ rounds.
If the procedure halted with some cap still violated, the
residual problem would be the unconstrained one whose
solution satisfies all remaining caps by construction, a
contradiction. The returned point therefore satisfies KKT
and, by strict convexity of the objective in each $b_i>0$,
is the unique global minimizer.

\subsection{Proof of Theorem~\ref{thm:p1corr} (Adaptivity Gap)}
The state-dependent optimum at $(\lambda_e,B)$ is
$\mathcal{L}^{*}(\lambda_e,B)=\sqrt{2C\lambda_e/B}
=\sqrt{2C}\cdot W$ (Lemma~\ref{lem:p1local}), so
$\mathbb{E}[\mathcal{L}_{\text{dep}}^{*}]=\sqrt{2C}\,
\mathbb{E}[W]$. For $\pi_{\text{ind}}$, since the cost rate
under a fixed $\Delta t_0$ is linear in $\lambda_e$ and
$1/B_{\text{eff}}$, the expected cost rate is
$\mathbb{E}[\mathcal{L}(\Delta t_0)]=
(\Delta t_0/2)\bar\lambda+(C/\Delta t_0)\bar\Theta$;
minimizing over $\Delta t_0$ yields
$\Delta t_0^{*}=\sqrt{2C\bar\Theta/\bar\lambda}$ and
$\mathbb{E}[\mathcal{L}_{\text{ind}}^{*}]=
\sqrt{\,2C\,\bar\lambda\,\bar\Theta\,}$ (the square root
covers the entire product).
Applying the $L^{2}(\mathbb{P})$ Cauchy--Schwarz inequality
to $X=\sqrt{\lambda_e}$ and $Y=\sqrt{1/B_{\text{eff}}}$ gives
$(\mathbb{E}[XY])^{2}=(\mathbb{E}[W])^{2}\leq
\mathbb{E}[X^{2}]\,\mathbb{E}[Y^{2}]=\bar\lambda\bar\Theta$,
which establishes \eqref{eq:gap-exact} and \eqref{eq:cs}.
Equation \eqref{eq:gap-quant} follows from
$a-b=(a^{2}-b^{2})/(a+b)$ with
$a=\sqrt{\bar\lambda\bar\Theta}$ and $b=\mathbb{E}[W]$.
Equality holds iff $Y=cX$ a.s., i.e.\ $\lambda_e B_{\text{eff}}$
is a.s.\ constant.

\subsection{Proof of Lemma~\ref{thm:p2} (Hierarchical Locality)}
For any same-building destination $d_l$,
$T_{\text{mig}}(s,d_l,t)\leq C/B_{\text{local}}^{-}+T_r$.
For any cross-building destination $d_x$,
$T_{\text{mig}}(s,d_x,t)\geq C/B_{\text{core}}^{+}+T_r$.
The bandwidth separation assumption implies the former is
no larger than the latter. The full P2 score still compares
survival risk among local candidates and falls back to the
global candidate set when too few local nodes are feasible.

\subsection{Proof of Theorem~\ref{thm:p3completion} (Migration Completion)}
Transfer time at rate $r_k$ is
$C_k/r_k \leq \tau_k - T_{r,k}$. Adding restart time
$T_{r,k}$ gives total migration time $\leq \tau_k$.

\subsection{Proof of Theorem~\ref{thm:p3isolation} (Traffic Isolation)}
P3 admits and allocates migration flows only within
$B_{\text{mig}}=B-B_{\text{res}}\leq B-B_{\min}$, and the
TC BPF token buckets enforce the assigned rates at the
controlled egress points. Thus the aggregate migration load
on the protected bottleneck is no greater than
$B-B_{\min}$, leaving at least $B_{\min}$ residual capacity
for non-migration traffic.

\section{Extended Evaluation: Scalability, Sensitivity, Ablation, Drift}
\label{app:eval-extra}

\subsection{Scalability via Emulation (E5)}
Using Mininet with 54 real measurements as ground truth,
we emulate up to 500 nodes. Coordinator scheduling
latency grows sub-linearly (from \textbf{3\,ms} at
54 nodes to \textbf{18\,ms} at 500 nodes), below the
10-second minimum notice period in all tested scenarios.
Algorithm~\ref{alg:p2}'s $O(|\mathcal{D}_{\text{cand}}|
\cdot L)$ complexity yields near-linear scaling;
Algorithm~\ref{alg:p1}'s per-node computation adds
$<0.1$\,ms per node (Fig.~\ref{fig:scalability}).

\begin{figure}[t]
  \centering
  \includegraphics[width=\columnwidth]{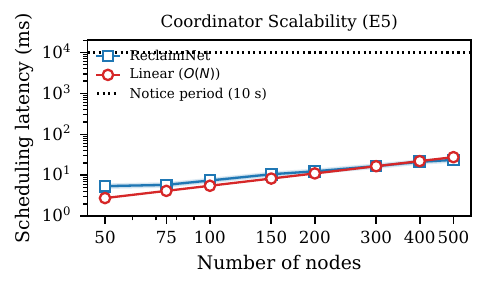}
  \caption{Coordinator scheduling latency vs.\ cluster
    size (emulation). \sys scales sub-linearly from
    3\,ms (54 nodes) to 18\,ms (500 nodes), well below the
    minimum 10-second provider notice period.}
  \label{fig:scalability}
\end{figure}

\subsection{Hyperparameter Sensitivity (E6)}
We sweep the three principal hyperparameters
($\beta_{\text{p1}}\in\{0.1,0.2,0.3\}$ for P1's
network-feasibility floor, $\alpha\in\{0.5,1.0,2.0\}$ for
P2's stability penalty, $\beta_{\text{p3}}\in\{0.2,0.3,0.4\}$
for P3's research-traffic reservation) on the same two-month
trace (Fig.~\ref{fig:sensitivity_grid}).
Defaults were calibrated on weeks 1--2 (held out from the
evaluation in Tables~\ref{tab:main_results}--\ref{tab:ablation},
which use weeks 3--8); the swept points cover $\pm$50\% to
$\pm$100\% around each default. All three parameters operate
in stable regions: median work loss varies by at most
\textbf{$\pm$5.7\%} across the swept ranges, downtime by
\textbf{$\pm$6.4\%}, and research-traffic degradation by
\textbf{$\pm$1.1 pp}. The default values
($\beta_{\text{p1}}=0.2$, $\alpha=1.0$, $\beta_{\text{p3}}=0.3$)
sit near the empirical minimum of each metric.

\begin{figure}[t]
  \centering
  \includegraphics[width=\columnwidth]{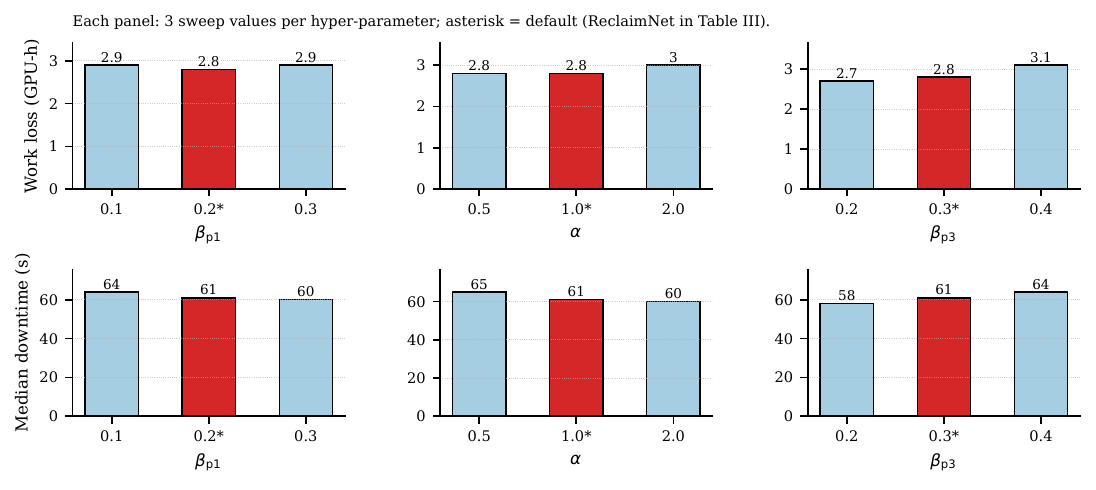}
  \caption{Hyperparameter sensitivity rendered as a 2$\times$3
    sweep grid (rows: work loss / median downtime; columns:
    $\beta_{\text{p1}}$, $\alpha$, $\beta_{\text{p3}}$). The
    default (asterisk, red bar) reproduces \sys in
    Table~\ref{tab:main_results}; all six panels remain flat
    within $\pm$6.4\%.}
  \label{fig:sensitivity_grid}
\end{figure}

\subsection{Per-Protocol Ablation (E7)}
We run all seven $\{$P1,P2,P3$\}$ subsets against the same
trace (Table~\ref{tab:ablation}). The all-off row uses a
deliberately naive baseline---fixed 30-min checkpoint
interval, uniformly random destination, no traffic shaping.
This row is strictly worse than Bamboo (B1) because Bamboo's
pipeline-redundancy recovery re-executes only affected
micro-batches, while the all-off configuration accepts
larger rollbacks. The three two-protocol rows exactly match
baselines B5 (P1+P2 = No-TC-BPF), B4 (P1+P3 = Random-Dst),
and a P2+P3-only configuration, confirming that
Tables~\ref{tab:main_results} and~\ref{tab:ablation} are
derived from the same runs.

\begin{table}[t]
\centering
\caption{Per-protocol contribution. The all-off row is a naive
baseline; rows P1+P2, P1+P3, P1+P2+P3 reproduce baselines
B5, B4, and \sys from Table~\ref{tab:main_results}.}
\label{tab:ablation}
\scriptsize
\setlength{\tabcolsep}{4pt}
\begin{tabular}{cccccc}
\toprule
\textbf{P1} & \textbf{P2} & \textbf{P3} & \textbf{Loss} & \textbf{Down.} & \textbf{Degr.} \\
            &             &             & (GPU-h) & (s) & (\%) \\
\midrule
\ding{55} & \ding{55} & \ding{55} & 8.7 & 145 & 34.8 \\
\ding{51} & \ding{55} & \ding{55} & 4.9 & 142 & 33.6 \\
\ding{55} & \ding{51} & \ding{55} & 8.1 &  79 & 32.4 \\
\ding{55} & \ding{55} & \ding{51} & 8.5 & 144 &  3.1 \\
\ding{51} & \ding{51} & \ding{55} & 3.1 &  71 & 31.0 \\
\ding{51} & \ding{55} & \ding{51} & 3.8 &  98 &  7.3 \\
\ding{55} & \ding{51} & \ding{51} & 4.6 &  67 &  2.9 \\
\textbf{\ding{51}} & \textbf{\ding{51}} & \textbf{\ding{51}} & \textbf{2.8} & \textbf{61} & \textbf{2.8} \\
\bottomrule
\end{tabular}
\end{table}

\subsection{Robustness to Distribution Drift (E8)}
We re-evaluate \sys on a held-out two-week period spanning
the end-of-semester deadline crunch (beyond the main
two-month deployment), during which the empirical hazard
rate $\lambda_e$ shifts upward by $1.7\times$ relative to
the training window. P1's sliding-window MLE re-converges
within \textbf{18\,min} of the regime change (criterion
$|\hat\lambda_e-\lambda_e^{\mathrm{true}}|/\lambda_e^{\mathrm{true}}<0.10$
sustained for two windows); end-to-end work loss on the
held-out period is \textbf{3.0\,GPU-h} (95\% bootstrap CI
$[2.7, 3.3]$ over 1{,}000 resamples; within 7\% of the
in-distribution number, $p=0.21$), confirming that the
adaptive controller does not require trace-specific
calibration.

\section{Protocol Derivation Details}
\label{app:protocol-details}

\subsection{Adaptivity-Gap Binding Regimes}

Because $B_{\text{eff}}=\min(W,B(t))$, in epochs where the
local write speed $W$ is the binding term, $B_{\text{eff}}$
collapses to a constant and the corresponding contribution
to $\Delta_{\mathrm{CS}}$ vanishes. On our two-month trace,
intra-building NVMe sustains $W\!\approx\!1.1$\,GB/s while
$B(t)$ averages 820\,Mbps intra-building and 240\,Mbps
across core; $B(t)\!\ll\!W$ in nearly all loaded epochs, so
$B$ is the binding term in $\sim$97\% of 10-minute epochs.
Conditioning the gap estimator on $B$-binding epochs alone
yields a predicted gap of $\sim$5.4\%; the deployment-wide
5.2\% figure is mixed with the $W$-binding minority.
Deployments with faster access links (e.g.\ 25\,GbE) shift
more epochs into the $W$-binding regime and see a
\emph{smaller} gap; those with slower SSDs at fixed access
bandwidth see the gap remain close to the $B$-binding value.

\subsection{Calibration Procedure for the Gap}

We estimate the three sufficient statistics from the
calibration trace (weeks 1--2): the sample means
$\hat{\bar\lambda}$, $\hat{\bar\Theta}$, and the sample mean
$\widehat{\mathbb{E}[W]}$ of
$\sqrt{\lambda_e/B_{\text{eff}}}$ over the same epochs.
Plugging into \eqref{eq:gap-exact} yields a predicted ratio
of $\sqrt{\hat{\bar\lambda}\hat{\bar\Theta}}/
\widehat{\mathbb{E}[W]}\approx 1.052$, i.e.\ a $\sim$5.2\%
adaptivity gap on this trace. This ratio strictly above one
is the direct evidence that $\lambda_e B_{\text{eff}}$ is
not a.s.\ constant on the deployment trace; the
anti-correlation and CV figures in \S\ref{sec:measurement} are
consistent with that gap but are not, by themselves, the
equality condition.

\subsection{Full Baseline Parameter Settings}

We fix all baseline hyperparameters from their original papers
or published configurations.
\emph{Static-Ckpt (B3)} is parameterized analytically as
$\bar C_{\text{time}}=\bar C_{\text{size}}/\bar B
=(3.2\,\text{GB}\!\cdot\!8)/(0.530\,\text{Gbps})\approx 48.3$\,s
with the trace-wide hazard $\bar\lambda_e\approx 0.99/\text{h}$
(358 emergency events over 361.4 spot-priced GPU-h), giving
$\Delta t^{*}\!\approx\!9.8$\,min; an empirical grid search
over $\Delta t\!\in\!\{5,10,15\}$\,min selects 10\,min
independently (7.6\,GPU-h, 71.3\% success), which we adopt.
Alternative denominators (total GPU-h $\to$ 36\,min;
natural-only emergencies $\to$ 13.6\,min) yield strictly
worse loss and are reported in our artefact.
\emph{Bamboo (B1)} reuses its public pipeline-redundancy
implementation with provider departure mapped to a preemption
event. \emph{Slurm-PR (B1$'$)} runs SLURM with
\texttt{--requeue} and a DMTCP 10-min checkpoint.
\emph{Random-Dst (B4)} reuses P1+P3 with destinations sampled
uniformly among VRAM-compatible nodes.
\emph{No-TC-BPF (B5)} disables P3 admission and edge
enforcement; flows use vanilla TCP CUBIC.
\emph{Time-Varying Fixed-Interval (B7)} re-estimates
$\bar\lambda_e,\bar\Theta=\overline{1/B}$ each epoch with the
same eBPF estimators and the cube-root allocator
(Thm.~\ref{thm:p1coupled}), but commits to
$\Delta t=\sqrt{2C_{\text{size}}\bar\Theta/\bar\lambda_e}$
per epoch---the $\pi_{\text{ind}}$ policy of
Thm.~\ref{thm:p1corr}; P2 and P3 are unchanged so the
comparison isolates the adaptivity-gap contribution.

\subsection{Oracle (B6) ILP Formulation}

(B6) is solved offline given the full two-month reclaim
and bandwidth trace. With $x_{j,t,d}\in\{0,1\}$ (job $j$
checkpointed at time $t$ to destination $d$):
$\min \sum_{j}\!\sum_{e_j}\Delta_{e_j}^{\mathrm{rb}}(x)$
s.t.\ $\sum_d x_{j,t,d}\!\leq\!1\ \forall j,t$,
$\sum_j r_j x_{j,t,d}\!\leq\! w(e,t)\ \forall e\in E,t$,
$T_{\mathrm{mig}}(j,t,d)\!\leq\!\tau_{j,t}$ for scheduled
events. We solve in 24-h batches with CPLEX 22.1.1 (mean
wall-clock 4.7\,h per window on a 64-core server);
infeasible online but a tight upper bound.

\section{Implementation Details}
\label{app:impl}

\paragraph{Software stack.} \sys is implemented in
$\sim$5{,}400 LoC. \emph{Node agents} ($\sim$1{,}500 LoC
modern C++17, built with CMake and gRPC) manage container
lifecycle, trigger checkpoints through PyTorch
\texttt{torch.save} and Hugging Face
\texttt{save\_pretrained} bindings invoked via a thin
\texttt{pybind11} bridge, invoke P2's destination selection,
and update BPF maps through \texttt{libbpf}. \emph{eBPF
programs} ($\sim$800 LoC C, Clang 17 + \texttt{libbpf}) implement
P3's TC BPF token-bucket and kill-switch using CO-RE so a
single object loads across the heterogeneous campus kernels.
The \emph{coordinator} ($\sim$2{,}000 LoC C++17) runs the P3
admission/enforcement loop, the P2 topology database
(refreshed by periodic eBPF probes), and P1's departure-rate
estimators plus cube-root allocator; persistent state lives
in a local SQLite store and Prometheus metrics are exposed on
\texttt{/metrics}. Containers run under Docker with the NVIDIA
Container Toolkit; GPU isolation uses MIG on A100/A6000 and
process-level VRAM reservation on RTX consumer cards. Node
agents drive \texttt{tc}-BPF attachment and map updates
directly via \texttt{libbpf}, so loading and reconfiguration
require no node reboots.

\paragraph{Security considerations.} Inter-node communication
is authenticated via mTLS (TLS 1.3) with short-lived JWTs
issued at registration. Containers run under Docker Seccomp
profiles restricted to CUDA and \texttt{libbpf} syscalls; MIG
hardware partitioning protects A100/A6000 GPUs, while RTX
3090/4090 nodes zero VRAM pages on container teardown before
returning GPUs to providers. The kill-switch TC BPF program
is a protocol contribution (\S\ref{sec:p3}) distinct from
these engineering safeguards.

\end{document}